\newmdenv[topline=false, bottomline=false, skipabove=\topsep, skipbelow=\topsep]{siderules}
\newtheorem{theorem}{Theorem}
\newtheorem{corollary}{Corollary}
\newtheorem{proposition}{Proposition}
\newtheorem{lemma}{Lemma}
\newtheorem{definition}{Definition}
\def\sD{\mathscr{D}}
\def\sH{\mathscr{H}}
\def\sK{\mathscr{K}}
\def\sL{\mathscr{L}}
\def\sP{\mathscr{P}}
\def\sS{\mathscr{S}}
\def\bN{{\mathbb N}}
\def\bS{{\mathbb S}}
\def\cA{{\mathfrak A}}
\def\cB{{\mathfrak B}}
\def\cM{{\mathfrak M}}
\newcommand{\ca}[1]{{\cal #1}}
\newcommand{\ben}{\begin{equation}}
\newcommand{\een}{\end{equation}}
\def\bena{\begin{eqnarray}}
\def\eena{\end{eqnarray}}
\def\cA{{\ca A}}
\def\cB{{\ca B}}
\def\cC{{\ca C}}
\def\cM{{\ca M}}
\renewcommand{\H}{\mathscr{H}}
\def\1{{\mathds{1}}}
\newcommand{\dd}{{\rm d}}
\newcommand{\tr}{\operatorname{Tr}}
\renewcommand{\log}{\operatorname{ln}}
\renewcommand{\epsilon}{\varepsilon}
\newcommand{\RR}{\mathbb{R}}
\newcommand{\CC}{\mathbb{C}}
\renewcommand{\Re}{{\rm Re}}
\begin{document}
\title{Trace- and improved data processing inequalities for von Neumann algebras 
}

	\author{Stefan Hollands$^{1}$\thanks{\tt stefan.hollands@uni-leipzig.de}\\
	{\it $^2$ ITP, Universit\" at Leipzig and MPI-MiS Leipzig}
	}

\date{\today}
	
\maketitle

\begin{abstract}
We prove a version of the data-processing inequality for the relative entropy for 
general von Neumann algebras with an explicit lower bound involving the measured relative entropy. 
The inequality, which generalizes previous work by Sutter et al. on 
finite dimensional density matrices, yields a bound how well a quantum state can be recovered 
after it has been passed through a channel. The natural applications of our results are in
quantum field theory where the von Neumann algebras are known to be of type III. Along the way we generalize 
various multi-trace inequalities to general von Neumann algebras.
\end{abstract}

\section{Introduction}

The relative entropy $S(\rho | \sigma) = \tr (\rho \log \rho-\rho \log \sigma)$ is an important operationally defined measure 
for the distinguishability of two statistical operators $\rho,\sigma$. A fundamental property of $S$ is that 
\ben\label{dpi}
S(\rho | \sigma) - S( T(\rho) | T(\sigma)) \ge 0
\een
for a quantum channel $ T$, i.e. completely positive linear trace preserving map\footnote{In the body of the paper, we use the slightly different notation 
$\tilde T$ for the action of a channel on a density matrix (Schr\" odinger picture), while $T$ denotes the dual action (Heisenberg picture) of the channel on the observables.}. The above difference represents the loss of distinguishability between 
$\sigma,\rho$ if these are passed through the channel $T$.

An important general question that can be abstracted from 
concrete settings such as quantum communication or quantum error correction is to what extent the action of a quantum 
channel can be reversed, i.e. to what extent it may be possible to recover $\rho$ from $ T(\rho)$. It was understood already a long 
time ago by Petz that the question of recoverability is intimately linked to the case of saturation of the data processing inequality (DPI) \eqref{dpi}, see e.g. 
\cite{Petz1993}. As was understood by \cite{Fawzi} -- and has subsequently been generalized in various was by \cite{Junge,Sutter2,Fawzi,Berta,Carlen,Jencova,Sutter,Wilde} -- explicit lower bounds in the DPI or related information theoretic inequalities can provide information how well a channel may be reversed if the inequality is e.g. nearly saturated. 

The current best result in this direction appears to be that by Sutter, Berta, and Tomamichel \cite{Sutter2}. It provides an explicit recovery channel, such that the recovered state is close to the original state $\rho$ in a suitable information theoretic measure provided the difference in the DPI is also small. The recovery channel $\alpha_{\sigma,T}$ is called ``explicit'' because it is given by a concrete expression involving only reference state $\sigma$ and $T$ (not the state $\rho$
that is to be recovered), and always perfectly recovers $\sigma$, i.e. $\alpha_{\sigma,T}(T(\sigma))=\sigma$. In fact, it is closely related -- though not precisely equal -- to the channel originally proposed by Petz \cite{Petz1,Petz2,Petz3,Petz1993}. 

The above mentioned works (though not \cite{Petz1,Petz2,Petz3,Petz1993}) establish their results only for 
type I von Neumann algebras -- in particular \cite{Sutter2} assumes a finite-dimensional Hilbert space. While this is well-motivated by applications in quantum computing, there are cases of interest when the algebras are not of this type. A notable example of this are quantum field theoretic applications
related to the ``quantum null energy condition'' (see e.g. \cite{Faulkner}) where the algebras are of type III \cite{Buchholz,haag_2}. 
With this application in mind we proved in \cite{hollands1} a generalization of \cite{Junge} in the case when the channel $T$ corresponds to 
an inclusion of general von Neumann algebras. This result has been generalized to arbitrary 2-positive channels $T$ in \cite{hollands2}, where the following improved DPI has been demonstrated:
\ben
\label{eq:Jensen0}
S(\rho | \sigma) - S(T(\rho) | T(\sigma))
\ge \frac{1-s}{s}  \int_{\RR} dt \, \beta_0(t) \, D_{s} ( \alpha_{\sigma,T}^t(T(\rho)) | \rho).
\een
Here, $s \in [1/2,1)$ and $D_s$ are the so-called ``sandwiched Renyi entropies'' \cite{mueller,wilde3}, which for $s=1/2$ become the negative log squared fidelity.
$\beta_0(t) dt$ is a certain explicit probability density and $\alpha^t_{\eta,T}$ is an explicit 1-parameter family of recovery channels that is a disintegration of $\alpha_{\eta,T}$ in the sense $\int dt \, \beta_0(t) \, \alpha_{\eta,T}^t = \alpha_{\eta,T}$. Using convexity of $D_s$ and Jensen's inequality, 
the bound implies
\ben
\label{eq:Jensen}
S(\rho | \sigma) - S(T(\rho) | T(\sigma))
\ge \frac{1-s}{s}  D_{s} ( \alpha_{\sigma,T}(T(\rho)) | \rho).
\een
A qualitatively similar result has been proved for general von Neumann algebras 
by Junge and LaRacuente \cite{Rac}. In their result, the sandwiched Renyi entropies are now replaced by some
other information theoretic quantity with an operational meaning. Both \cite{hollands2,Rac} lead to the same inequality for $s=1/2$. For 
type I algebras and $s=1/2$ \eqref{eq:Jensen0} is the result by \cite{Junge}, but the relation for 
general $s$ is unclear  to the author. We also mention recent results by Gao and Wilde \cite{Gao} of a roughly similar flavor but different emphasis, which apply to von Neumann algebras with a trace though not type III. 

In the present paper, we provide a generalization of \cite{Sutter2} to arbitrary (sigma-finite) von Neumann algebras. This version of the improved DPI is qualitatively similar to \eqref{eq:Jensen}. The definition of the recovery channel is in fact identical to that in \eqref{eq:Jensen}, but we have yet another information theoretic quantity on the 
right side, namely (thm. \ref{thm:main})
\ben
\label{eq:Jensen1}
S(\rho | \sigma) - S(T(\rho) | T(\sigma))
\ge S_{\rm meas} ( \alpha_{\sigma,T}(T(\rho)) | \rho).
\een 
 Here, $S_{\rm meas}$ is the ``measured relative entropy'', defined as the maximum possible value of the relative entropy restricted to a 
 commutative subalgebra. We show below (prop. \ref{FS}) that for $s=1/2$, this inequality is sharper than \eqref{eq:Jensen} -- though not 
 in general the inequality \eqref{eq:Jensen0} with the integral outside -- for all $\rho,\sigma$. 
A conceptual advantage of \eqref{eq:Jensen1} over both \eqref{eq:Jensen0} and \eqref{eq:Jensen} (and likewise to the inequalities proven in \cite{Rac}) is that it is saturated in the commutative case, as noted already by  \cite{Sutter2}. So in this respect \eqref{eq:Jensen1} is sharp unlike its predecessors.
 
Our proof technique is similar in several respects to that in \cite{Sutter2} and related antecedents such as \cite{Junge} in that we also use interpolation arguments for $L_p$-spaces. However, there are also some key differences requiring technical modifications: For instance, the operators $\log \rho$ or 
$\log \sigma$ no longer exist for general von Neumann algebras or the use of ordinary $L_p$ (Schatten)-spaces is prohibited since a general von Neumann algebra does not have a trace. As in our previous papers \cite{hollands1,hollands2} -- referred to as papers I,II -- 
our solution to the first problem is to work entirely 
with Araki's relative modular operator, the log of which can roughly be viewed as a {\em difference} between $\log \rho$ and $\log \sigma$. 
Likewise, as in \cite{hollands1,hollands2}, our solution to the second problem is to work with the Araki-Masuda non-commutative $L_p$-spaces \cite{AM} which are very closely related to the sandwiched relative Renyi entropies\footnote{\cite{Rac} use a somewhat different approach to $L_p$
spaces to circumvent the absence of a tracial state in the general von Neumann algebra setting. Their approach appears to us less natural for the purposes of this paper.}. For these norms, we require a complex interpolation theory, see lem. \ref{lem:hirsch}, which generalizes a result in \cite{hollands1}. This result is then applied to a specially constructed analytic family of vectors and combined with certain cutoff-techiques for appropriately extended domains of analyticity in a similar way as in \cite{hollands1}. However, in \cite{hollands1,hollands2}, such cutoff techniques were needed to control the limit of the 
Araki-Masuda norms as $p \to 2$, whereas in the present paper, it is the limit $p \to \infty$ which is relevant. The regularization is necessary here to apply the powerful technique of bounded perturbations of normal states of a von Neumann algebra, and a (somewhat modified) version of the Lie-Trotter product formula for von Neumann algebras \cite{Araki4}. These ideas go beyond \cite{hollands1,hollands2} and also yield various new ``trace'' inequalities for von Neumann algebras which could be of independent interest. 

This paper is organized as follows. In sec. \ref{sec:1} we review some prerequisite notions from the theory of von Neumann algebras. 
In sec. \ref{sec:2} we establish an interpolation theorem for the Araki-Masuda $L_p$-norms, which we apply in sec. \ref{sec:3} to 
obtain generalizations of various known mutli-trace inequalities to von Neumann algebras. In sec. \ref{sec:4} we establish our main result, thm. \ref{thm:main}. The definition of the $L_p$-norms is relegated to the appendix.

\section{Von Neumann algebras and modular theory}
\label{sec:1}

Let $\mathcal{A} = M_{n}({\mathbb C})$. The fundamental representation of this algebra is on $\CC^n$, but one can also 
work in the ``standard'' Hilbert space ($\sH \simeq M_{n}({\mathbb C}) \simeq
{\mathbb C}^{n} \otimes  \mathbb{C}^n$). 
Vectors $|\zeta\rangle$ in $\sH$ are thus identified with matrices $\zeta \in M_n(\CC)$.
$\sH \simeq M_{n}({\mathbb C})$ is both a left and right module for $\cA$, 
\begin{equation}
l(a) \left| \zeta \right> = \left| a \zeta \right>\, \qquad r(b) \left| \zeta \right> = \left| \zeta b \right>, 
\end{equation}
and the inner product on $\sH$ is the Hilbert-Schmidt inner product $\langle \zeta_1 | \zeta_2\rangle = \tr(\zeta_1^* \zeta_2)$. 
A mixed state, represented by a density matrix $\omega$, gives rise to a linear functional on $\cA$ by
\ben
\omega(a) = \tr (\omega a),
\een
where the functional and the state is denoted by the same symbol. These linear functionals 
are alternatively characterized by the property $\omega(a^*a) \ge 0 , \omega(1)=1$.

A ($\sigma$-finite) von Neumann algebra in standard form $\cM$ is an ultra-weakly closed linear subspace of the bounded operators on a 
Hilbert space $\sH$. $\cM$ should contain $1$, be closed under products and the $*$-operation
should have a cyclic and separating vector $|\psi\rangle \in \sH$. Cyclic and separating 
means that $\cM |\psi\rangle$ is dense in $\sH$ and $m|\psi\rangle = 0$ implies $m=0$. 
In the matrix example, $\psi$ should therefore be invertible. The set of ultra-weakly continuous positive linear functionals
(thus satisfying $\omega(a^*a) \ge 0 , \omega(1)=1$) is called $\sS(\cM)$. For a detailed account of von Neumann 
algebras see \cite{Takesaki}.

Associated with a  von Neumann algebra in standard form\footnote{
More precisely, a standard form is actually 
defined by the combined structure $(\cM, \sH, \sP^\sharp_\cM, J)$, which can be recovered if we have a 
cyclic and separating vector.
} 
is a convex cone $\sP^\sharp_\cM$ 
and an anti-linear involution $J$, called ``modular conjugation'' leaving this cone invariant. A possible choice of this non-unique 
``natural cone'' for $\cA=M_n(\CC)$ is 
 the subset of positive semi-definite matrices in $\sH$, and in this case, $J \big| \zeta \big> = \big| \zeta^* \big>$. 
 A general property of $J$ which is easily verified in this example is that $J\cM J=\cM'$, the latter meaning the commutant of $\cM$ on $\sH$.
Given vectors $|\psi\rangle, |\eta\rangle, |\zeta\rangle \in \sH$ and $m \in \cM$, one defines following Araki \cite{Araki1} (see also 
app. C of \cite{AM} for many more details)
\ben
S_{\eta,\psi} \left( m |\psi\rangle + (1-\pi^{\cM'}(\psi)) |\zeta \rangle \right) = \pi^{\cM}(\psi) m^*|\eta\rangle. 
\een
Here $\pi^\cM(\psi) \in \cM$ is the orthogonal projection onto the closure of the subspace $\cM'|\psi\rangle$
and $\pi^{\cM'}(\psi) \in \cM'$ that onto the closure of $\cM |\psi\rangle$. The definition is consistent
because $m \pi^{\cM}(\psi) = 0$ if $m|\psi\rangle = 0$. 
One shows that $S_{\eta,\psi}$ is a closable 
operator and that if $|\psi\rangle \in \sP^\sharp_\cM$, then 
\ben
S_{\eta,\psi} = J\Delta_{\eta,\psi}^{1/2}, \quad 
 S_{\eta,\psi}^* \bar S_{\eta,\psi} = \Delta_{\eta,\psi}, 
\een
One calls the self-adjoint, non-negative operator 
$\Delta_{\eta,\psi}$ the ``relative modular operator''.  Its support is $\pi^\cM(\eta) \pi^{\cM'}(\psi)$ and 
complex powers $\Delta^z_{\eta,\psi}$ are understood as $0$ on the orthogonal complement of the support. 
The modular conjugation and relative modular operators of $\cA=M_n(\CC)$ with the above choice of  natural cone are:
\begin{equation}
\label{modobj}
 J \big| \zeta \big> = \big| \zeta^* \big> \, \qquad \Delta_{\eta,\psi} = l(\omega_\eta) r(\omega_\psi^{-1}),
\end{equation}
where we invert the density matrix $\omega_\psi$ on the range of $\pi^{\cM'}(\psi)$ which in the case at hand is 
the orthogonal projector onto the complement of the null space of $\omega_\psi$. 

For a general von Neumann algebra, every positive linear functional $\omega \in \sS(\cM)$ corresponds to one and only one
vector $|\xi_\omega\rangle$ in the natural cone $\sP_\cM^\natural$ such that $\omega(a) = \langle \xi_\omega | a \xi_\omega\rangle$. Vice versa, any 
vector $|\psi\rangle$ (in the natural cone or not) gives rise to a linear functional 
\ben
\omega_\psi(a) = \langle \psi | a \psi \rangle, \quad \text{for all $a \in \cA$.}
\een
For $\cA = M_n(\CC)$, this linear functional is identified with the density matrix $\omega_\psi = \psi \psi^*$ and the natural cone vectors correspond to the unique positive square root of the corresponding density matrix, now thought of as pure states in the standard Hilbert space. So the vector representative 
of a density matrix $\omega$ in the natural cone is $|\xi_\omega\rangle = |\omega^{1/2}\rangle$. An important fact used implicitly in several places below
is that if two linear functionals are close in norm, then the vectors in the natural cone are as well, and vice versa:
\ben\label{CS}
\| \xi_\psi - \xi_\eta \|^2 \le \| \omega_\eta - \omega_\xi\| \le \| \xi_\psi + \xi_\eta \| \,  \| \xi_\psi - \xi_\eta \| , 
\een
where the norm of a linear functional is $\|\omega\| = \sup\{ |\omega(m)| : m \in \cM, \|m\| =1 \}$. In the case $\cA=M_n(\CC)$, the latter norm 
is $\| \omega \| = \tr |\omega|$, so the above relation expresses the Powers-St\" ormer inequality between the trace norm and the Hilbert-Schmidt norm.

Let us finish this briefest of introduction to von Neumann
algebras by summarizing (again) some of our 

\medskip
\noindent
{\bf Notations and conventions:} Calligraphic letters $\cA, \cM, \dots$ denote von Neumann algebras, always assumed $\sigma$-finite. Calligraphic letters $\sH, \sK, \dots$ denote complex Hilbert spaces, always assumed to be separable. $\sS(\cM)$ denotes the set of all ultra-weakly continuous, positive, normalized linear functionals on $\cM$ (``states''), which 
are in one-to-one correspondence with density matrices if $\cA = M_n(\CC)$. $\cM_+$ is the subset of all non-negative self-adjoint operators in $\cM$
and $\cM_{\rm s.a.}$ the subset of all  self-adjoint elements of the von Neumann algebra $\cM$.
We use the physicist's ``ket''-notation $|\psi\rangle$ for vectors in a Hilbert space. 
The scalar product is written as 
\ben
(|\psi\rangle, |\psi'\rangle)_{\sH} =: \langle \psi | \psi' \rangle
\een
and is anti-linear in the first entry. The norm of a vector is written simply as
$\| |\psi\rangle \| =: \| \psi \|$. The action of a linear operator $T$ on a ket is sometimes written as $T|\phi\rangle = |T\phi\rangle$. 
In this spirit, the norm of a bounded linear operator $T$ on $\sH$ is written as $\|T\|= \sup_{|\psi\rangle: \|\psi\|=1} \|T\psi\|$. 

\section{Interpolation of non-commutative $L_p$ norms}
\label{sec:2}

For the algebra $\cA = M_n(\CC)$ the standard Hilbert space $\sH \cong M_n(\CC)$ on which $\cA$ acts by left multiplication
can be equipped with various norms. We have already mentioned that the 2-norm (or Hilbert-Schmidt norm)
\ben
\|\zeta\|_2 = (\tr \zeta \zeta^*)^{1/2},
\een
actually defines the Hilbert space norm on $\sH$ (so the subscript ``2'' is generally omitted). For $p>0$, one can generalize this to 
\ben
\|\zeta\|_p = [\tr (\zeta \zeta^*)^{p/2}]^{1/p}. 
\een
Given a faithful vector $|\psi\rangle \in \sH$ with associated linear functional $\omega_\psi(a) = \langle \psi | a \psi\rangle = \tr (a\omega_\psi)$ (Hilbert Schmidt inner product), one can also define the yet more general norms:
 \ben
\| \zeta \|_{p,\psi}= [ \tr(\zeta \omega_{\psi}^{2/p-1} \zeta^*)^{p/2}]^{1/p}.
\een
The faithful condition is relevant for $p>2$ as it ensures that $\omega_\psi$ is invertible.
The generalized $L_p$-norms $\| \zeta \|_{p,\psi}$ evidently reduce to usual $L_p$-norms if $\omega_\psi(a) = \tr(a)/n$ is the tracial state.
A general von Neumann algebra $\cM$ in standard 
form need not have such a tracial state, but Araki and Masuda \cite{AM} have shown that one can still define the above ``non-commuting $L_p$-norms'' for $p \ge 1$ using the relative modular operators based on $|\psi\rangle$, see also \cite{Jencova,JencovaLp1,Berta2}. Their basic definitions are recalled for convenience in the appendix. 
The following interpolation result for the Araki-Masuda $L_p$-norms is one of the main workhorses of this article.

\begin{lemma}
\label{lem:hirsch}
Let $|G(z)\rangle$ be a $\sH$-valued holomorphic function on the strip $\bS_{1/2}=\{0<{\rm Re}z<1/2\}$
that is uniformly bounded in the closure, and let $|\psi\rangle \in \sH$ a state of a
$\sigma$-finite von Neumann algebra $\mathcal M$ in standard form acting on $\sH$. 
For $0<\theta<1/2$, $p_0,p_1 \in [1,2]$ or $p_0,p_1 \in [2,\infty]$, let
\ben
\frac{1}{p_\theta} = \frac{1-2\theta}{p_0} + \frac{2\theta}{p_1}.
\een
Then 
\begin{align}
\label{himp}
& \ln \left\| G(\theta)\right\|_{p_\theta, \psi} \\
  \leq  & \int_{-\infty}^{\infty} d t 
  \left(
 (1-2 \theta)  \alpha_\theta(t) \ln \left\| G(it) \right\|_{p_0, \psi} + (2\theta)  \beta_\theta(t) 
  \ln  \left\| G(1/2+it) \right\|_{p_1, \psi} \right),
  \nonumber
\end{align}
where
\begin{equation}
\alpha_\theta(t) = \frac{ \sin(2\pi\theta)}{(1-2\theta)(\cosh(2\pi t ) - \cos(2\pi \theta)) }\,,
\qquad \beta_\theta(t) = \frac{ \sin(2\pi\theta)}{2 \theta(\cosh(2\pi t ) + \cos(2\pi \theta)) }.
\end{equation}
\end{lemma}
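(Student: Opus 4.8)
The plan is first to recognize the weights as Poisson kernels of the strip and then to run a vector-valued Hadamard three-lines argument. A direct computation (mapping $\bS_{1/2}$ conformally to the disk) shows that $(1-2\theta)\alpha_\theta(t)\,dt$ and $2\theta\,\beta_\theta(t)\,dt$ are precisely the harmonic measures on the two boundary lines $\{\mathrm{Re}\,z=0\}$ and $\{\mathrm{Re}\,z=\tfrac12\}$ as seen from the interior point $\theta$; in particular $\int_\RR\alpha_\theta=\int_\RR\beta_\theta=1$ and the two masses sum to $(1-2\theta)+2\theta=1$. I would establish the inequality first for $p_0,p_1\in[2,\infty]$ and then obtain the range $[1,2]$ by duality. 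For $p\in[2,\infty]$ recall the Araki--Masuda variational formula $\|\zeta\|_{p,\psi}=\sup_{\eta\in\sS(\cM)}\|\Delta_{\eta,\psi}^{1/2-1/p}\zeta\|$ (Araki--Masuda; see the appendix). Fix $\eta\in\sS(\cM)$ and put $s(z)=\tfrac12-\tfrac1{p_0}+2(\tfrac1{p_0}-\tfrac1{p_1})z$, the affine function with $\mathrm{Re}\,s=\tfrac12-\tfrac1{p_0}$ on the left edge, $\mathrm{Re}\,s=\tfrac12-\tfrac1{p_1}$ on the right edge, and $\mathrm{Re}\,s(\theta)=\tfrac12-\tfrac1{p_\theta}$, all lying in $[0,\tfrac12]$.

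The natural interpolant $\Delta_{\eta,\psi}^{s(z)}G(z)$ is problematic because a merely bounded $G(z)$ need not lie in $\mathrm{dom}\,\Delta_{\eta,\psi}^{1/2}$, so I would regularize with the spectral projections $e_n=E_{\Delta_{\eta,\psi}}([1/n,n])$ and set $F_n(z)=\Delta_{\eta,\psi}^{s(z)}e_nG(z)$, which is a bounded $\sH$-valued holomorphic function on $\bS_{1/2}$ since $\Delta_{\eta,\psi}^{s(z)}e_n$ is bounded and holomorphic in $z$. For each unit vector $v$ the scalar function $\langle v|F_n(z)\rangle$ is bounded holomorphic, so the Poisson--Jensen inequality on the strip yields
\ben
\ln|\langle v|F_n(\theta)\rangle|\le\int_\RR(1-2\theta)\alpha_\theta(t)\ln\|F_n(it)\|\,dt+\int_\RR 2\theta\,\beta_\theta(t)\ln\|F_n(\tfrac12+it)\|\,dt,
\een
where on the right I have already used $|\langle v|F_n\rangle|\le\|F_n\|$; since this right-hand side is independent of $v$, taking the supremum over $v$ replaces the left side by $\ln\|F_n(\theta)\|$. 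On the boundary the imaginary part of $s$ is a unitary power of $\Delta_{\eta,\psi}$, and since $e_n\le1$ commutes with $\Delta_{\eta,\psi}$,
\ben
\|F_n(it)\|=\|\Delta_{\eta,\psi}^{1/2-1/p_0}e_nG(it)\|\le\|\Delta_{\eta,\psi}^{1/2-1/p_0}G(it)\|\le\|G(it)\|_{p_0,\psi},
\een
and analogously on $\{\mathrm{Re}\,z=\tfrac12\}$, uniformly in $n$. Letting $n\to\infty$, monotone convergence gives $\|F_n(\theta)\|\nearrow\|\Delta_{\eta,\psi}^{1/2-1/p_\theta}G(\theta)\|$, and finally the supremum over $\eta\in\sS(\cM)$ turns the left side into $\ln\|G(\theta)\|_{p_\theta,\psi}$, which is the assertion.

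For $p_0,p_1\in[1,2]$ I would argue by duality. The conjugate exponents $p_i'\in[2,\infty]$ obey $\tfrac1{p_\theta'}=\tfrac{1-2\theta}{p_0'}+\tfrac{2\theta}{p_1'}$ with the same $\theta$, hence the same Poisson kernels. Using the Araki--Masuda $L_p$--$L_{p'}$ duality (appendix) I would write $\|G(\theta)\|_{p_\theta,\psi}$ as a supremum of pairings $|\langle\phi|G(\theta)\rangle|$ over the unit ball of the conjugate norm, realize a near-optimal $\phi$ as the interior value of an analytic family on $\bS_{1/2}$ built from $\Delta_{\eta,\psi}$, and feed this into the already-proven inequality for the exponents $p_i'\in[2,\infty]$.

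The main obstacle is exactly the domain issue behind the regularization: for general bounded holomorphic $G$ the operator $\Delta_{\eta,\psi}^{s(z)}$ cannot simply be applied to $G(z)$, and---unlike the type I case---there is no trace to fall back on. Introducing the cutoff $e_n$ (so that $\Delta_{\eta,\psi}^{s(z)}e_n$ is a bona fide bounded holomorphic family) and removing it afterwards by monotone convergence is what legitimizes the vector-valued three-lines step; a secondary technical point, needed to make sense of the boundary integrals, is the a.e.\ existence of nontangential boundary values of the bounded analytic functions $\langle v|F_n\rangle$ (Fatou's theorem for the strip).
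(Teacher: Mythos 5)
Your treatment of the range $p_0,p_1\in[2,\infty]$ is correct, and it takes a genuinely different route from the paper. The paper (part (a2) of its proof) works through the Araki--Masuda duality theory: it picks a norming element of $L_{p_\theta'}(\cM,\psi)$ via the non-commutative H\"older inequality, invokes the polar decomposition and the representation of positive dual elements to build a \emph{scalar} analytic function $f(z)=\langle \Delta^{1/2}_{\phi,\psi}\psi\,|\,\Delta^{(1-2z)/p_0'+2z/p_1'-1/2}_{\phi,\psi}u^*G(z)\rangle$, and only then applies Hirschman. Your route --- the variational definition $\|\zeta\|_{p,\psi}=\sup_{\|\phi\|=1}\|\Delta_{\phi,\psi}^{1/2-1/p}\zeta\|$, spectral cutoffs $e_n$ making $\Delta_{\eta,\psi}^{s(z)}e_n$ a bounded holomorphic family, a vector-valued three-lines inequality obtained by pairing with unit vectors, and monotone convergence in $n$ --- is more elementary: it bypasses the AM duality/polar-decomposition machinery entirely, and in particular it does not use that $\omega_\psi$ is faithful. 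That is a real gain, since the paper must assume faithfulness in parts (a1)--(a2) and then remove it in part (b) by a nontrivial approximation argument (lem. \ref{monlp}, which in turn rests on lem. \ref{modconv} about convergence of powers of relative modular operators).

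The range $p_0,p_1\in[1,2]$, however, contains a genuine gap. After choosing a near-optimal dual element $\phi$ with $\|\phi\|_{p_\theta',\psi}\le 1$ and extending it, via the AM polar decomposition $\phi=u\Delta_{\chi,\psi}^{1/p_\theta'}\psi$, to the analytic family $\Phi(z)=u\Delta_{\chi,\psi}^{(1-2\bar z)/p_0'+2\bar z/p_1'}\psi$, what closes the argument is the pair of boundary estimates $\|\Phi(it)\|_{p_0',\psi}\le 1$ and $\|\Phi(1/2+it)\|_{p_1',\psi}\le 1$, which combine with the AM H\"older inequality to control $|\langle\Phi(z)|G(z)\rangle|$ on the two edges. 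These estimates \emph{cannot} be obtained by ``feeding the family into the already-proven inequality for $p_i'\in[2,\infty]$'': that inequality bounds an interior norm by boundary norms, whereas here you need the reverse --- control of the boundary norms of a family constructed from its value at the interior point. The paper supplies exactly this step by a separate computation: it writes $\Delta_{\chi,\psi}^{1/2-1/p'-2it}\Delta_{\phi,\psi}^{1/p'+2it}\psi=\Delta_{\chi,\psi}^{1/2-1/p'}(D\chi:D\phi)_{2t}\,\pi^{\cM}(\phi)\,\Delta_{\phi,\psi}^{1/p'}\psi$, so that the imaginary powers collapse into a Connes cocycle lying in $\cM$ with norm at most one, and then uses $\|a\zeta\|_{p',\psi}\le\|a\|\,\|\zeta\|_{p',\psi}$ (AM, lem. 4.4) and the variational characterization to conclude $\|\Delta_{\phi,\psi}^{1/p'+2it}\psi\|_{p',\psi}\le\|\phi\|^{p'}$. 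Without this cocycle argument (or a substitute) your duality half is incomplete. A secondary point: the AM duality theorem and polar decomposition you invoke require $|\psi\rangle$ to be cyclic and separating, while the lemma is stated for an arbitrary state; so for the $[1,2]$ range you would also need the paper's part (b) approximation $\omega_{\psi_\epsilon}=(1-\epsilon)\omega_\psi+\epsilon\omega_\eta$ together with the $L_p$-norm continuity lemma to remove the faithfulness assumption.
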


\begin{proof}

We may assume $|\psi\rangle \in \mathscr{P}^\natural_{\mathcal M}$
by invariance of the $L_p$-norms. In parts (a1), (a2) of this proof we 
first apply that $|\psi\rangle$ is faithful in order to apply the results by \cite{AM}.

(a1) Assume that $p_0,p_1 \in [1,2]$. This part of the proof is taken from paper I and only included for convenience.
Denote the dual of a H\" older index $p$ by $p'$, 
defined so that $1/p+1/p'=1$. \cite{AM} have shown that the non-commutative $L_p(\mathcal{M}, \psi)$-norm of a vector $|\zeta\rangle$ relative to $|\psi\rangle$
can be characterized by (dropping the superscript on the norm)
\ben
\| \zeta \|_{p,\psi} = \sup \{ |\langle \zeta | \zeta' \rangle | : \ \ \| \zeta' \|_{p',\psi} \le 1 \}. 
\een
They have furthermore shown (\cite{AM}, thm. 3) that when $p'\ge 2$, any vector $|\zeta'\rangle \in L_{p'}(\mathcal{M},\psi)$ has a unique generalized polar decomposition, i.e. can be written in 
the form $|\zeta'\rangle = u \Delta_{\phi,\psi}^{1/p'} |\psi\rangle$, where $u$ is a unitary or partial isometry from $\mathcal{M}$. Furthermore, they show that $\| \zeta' \|_{p',\psi} = \|\phi\|^{p'}$. We may thus choose a $u$ and a normalized $|\phi \rangle$, so that 
\ben
 \| G(\theta) \|_{p_\theta,\psi} = \langle u \Delta_{\phi,\psi}^{1/p_\theta'} \psi | G(\theta) \rangle, 
\een
perhaps up to a small error which we can let go zero in the end. Now we define $p_\theta$ as in the statement, so that 
\ben
\frac{1}{p_\theta'} = \frac{1-2\theta}{p_0'} + \frac{2\theta}{p_1'}, 
\een
and we define an auxiliary function $f(z)$ by 
\ben\label{eq:fdef}
f(z)=\langle u \Delta_{\phi,\psi}^{2\bar z/p_1'+(1-2\bar z)/p_0'} \psi | G(z) \rangle, 
\een
noting that 
\ben
f(\theta)= \| G(\theta) \|_{p_\theta,\psi} 
\een
by construction.
By Tomita-Takesaki-theory, $f(z)$ is holomorphic in $\bS_{1/2}$. For the values at the boundary of the strip $\bS_{1/2}$, we estimate
\ben\label{eq:est1}
\begin{split}
|f(it)| & = |\langle u \Delta_{\phi,\psi}^{-2it(1/p_1'-1/p_0')} \Delta_{\phi,\psi}^{1/p_0'} \psi | G(it) \rangle|  \\
        & \le \| u \Delta_{\phi,\psi}^{-2it(1/p_1'-1/p_0')} \Delta_{\phi,\psi}^{1/p_0'} \psi \|_{p_0',\psi} \| G(it) \|_{p_0,\psi}  \\
        & \le \| \Delta_{\phi,\psi}^{-2it(1/p_1'-1/p_0')} \Delta_{\phi,\psi}^{1/p_0'} \psi \|_{p_0',\psi} \| G(it) \|_{p_0,\psi}  \\
        & \le \|\phi\|^{p_0'} \| G(it) \|_{p_0,\psi}  \\
        & \le \| G(it) \|_{p_0,\psi}  .
\end{split}
\een
Here we used the version of H\" older's inequality proved by \cite{AM}, we used $ \| a^* \zeta \|_{p_0',\psi} \le \|a\| \| \zeta \|_{p_0',\psi}$ for 
any $a \in \cA$, see \cite{AM}, lem. 4.4, and we used $\| \Delta_{\phi,\psi}^{-2it(1/p_1'-1/p_0')} \Delta_{\phi,\psi}^{1/p_0'} \psi \|_{p_0',\psi} \le \|\phi\|^{p_0'}$ which we 
prove momentarily. A similar chain of inequalities also gives 
\ben
\label{eq:est2}
|f(1/2+it)| \le \| G(1/2+it) \|_{p_1,\psi}.
\een
To prove the remaining claim, let $|\zeta'\rangle = \Delta^{z}_{\phi,\psi}|\psi\rangle$ and $z=1/p'+2it$. Then we have, using the variational characterization 
by \cite{AM} of the $L_{p'}(\mathcal{M},\psi)$-norm when $p'\ge 2$:
\ben
\begin{split}
\|\zeta'\|_{p',\psi} =& \sup_{} \{ \| \Delta_{\chi,\psi}^{1/2-1/p'} \Delta^{z}_{\phi,\psi} \psi  \| : \| \chi \|=1 \}\\
=& \sup_{} \{ \| \Delta_{\chi,\psi}^{1/2-1/p'-2it} \Delta^{1/p'+2it}_{\phi,\psi}  \psi  \| : \| \chi \|=1 \}\\
=& \sup_{} \{ \| \Delta_{\chi,\psi}^{1/2-1/p'} (D\chi:D\phi)_{2t} \pi^{\mathcal M}(\phi) \Delta^{1/p'}_{\phi,\psi}  \psi \| : \| \chi \|=1 \} \\
\le & \sup_{} \{ \| \Delta_{\chi,\psi}^{1/2-1/p'} a \Delta^{1/p'}_{\phi,\psi}  \psi \| : \| \chi \|=1, a \in \cM, \|a\|=1 \} \\
\le & \sup_{} \{ \| a \Delta^{1/p'}_{\phi,\psi}  \psi  \|_{p',\psi} : a \in \cM, \|a\|=1 \}. 
\end{split}
\een
Using \cite{AM}, lem. 4.4, we continue this estimation as 
\ben
\le  
\sup_{a \in \cM, \|a\|=1} \| a \|   \|\Delta^{1/p'}_{\phi,\psi}  \psi  \|_{p',\psi}
= \|\phi\|^{p'}, 
\een
which gives the desired result. 

Next, we use the Hirschman improvement of the Hadamard three lines theorem \cite{Hirschman}.

\begin{lemma}\label{lem:4}
Let $g(z)$ be holomorphic on the strip $\bS_{1/2}$, continuous and uniformly bounded at the boundary of $\bS_{1/2}$. Then for $\theta \in (0,1/2)$, 
\ben
\ln |g(\theta)| \le \int_{-\infty}^\infty \left(
\beta_{\theta}(t) \ln |g(1/2+it)|^{2\theta} +  \alpha_{\theta}(t) \ln |g(it)|^{1-2\theta}
\right) \dd t,
\een
where $\alpha_\theta(t),\beta_\theta(t)$ are as in lem. \ref{lem:hirsch}.
\end{lemma}

Applying this to $g=f$ gives the statement of the theorem. 

\medskip

(a2) Now we assume that $p_0,p_1 \in [2,\infty]$. \cite{AM} have shown that 
for any\footnote{The cone $\sP_\cM^{1/(2p')}$ is defined as the closure of $\Delta^{1/(2p')}_\psi \cM_+ |\psi\rangle$ and its 
properties are discussed in \cite{AM}.} $\zeta'_+ \in L_{p'}^+(\cM,\psi):= L_{p'}$-closure of $\sP_\cM^{1/(2p')}$, $1 \le p' \le 2$ there is $\phi \in \sH$
such that for all $\zeta \in L_p(\cM,\psi)$ we have 
\ben
\langle \zeta'_+ | \zeta \rangle = \langle \Delta^{1/2}_{\phi,\psi} \psi | \Delta^{(1/p')-(1/2)}_{\phi,\psi} \zeta \rangle
\een
and such that $\|\zeta'_+\|_{p',\psi} = \|\phi\|^{2/p}$. Furthermore, by the non-commutative H\" older inequality 
proven in \cite{AM}, there exists $\zeta' \in L_{p_\theta'}(\cM,\psi)$ such that 
\ben
\| G(\theta) \|_{p_\theta,\psi} = \langle \zeta' | G(\theta) \rangle, \quad \|\zeta'\|_{p'_\theta,\psi} = 1.
\een
Thus, since by \cite{AM}, thm. 3 we may write 
$\zeta'=u\zeta'_+, u \in \cM$ with $u^*u \le 1$ and $\zeta'_+ \in L_{p'_\theta}^+(\cM,\psi)$, we have
\ben
\begin{split}
\| G(\theta) \|_{p_\theta,\psi} =& \langle \Delta^{1/2}_{\phi,\psi} \psi | \Delta^{1/p_\theta'-1/2}_{\phi,\psi} u^* G(\theta) \rangle\\
=& \langle \Delta^{1/2}_{\phi,\psi} \psi | \Delta^{(1-2\theta)/p_0' + (2\theta)/p_1' -1/2}_{\phi,\psi} u^* G(\theta) \rangle
\end{split}
\een
and $\|\phi\|=1$.
Similarly to the previous case we now consider the 
function
\ben\label{eq:fdef1}
f(z)= \langle \Delta^{1/2}_{\phi,\psi} \psi | \Delta^{(1-2z)/p_0' + (2z)/p_1' -1/2}_{\phi,\psi} u^* G(z) \rangle
, 
\een
which is holomorphic for $z \in \bS_{1/2}$ and uniformly bounded on the closure. For the lower boundary value we calculate
\ben
\begin{split}
|f(it)| =& |\langle \Delta^{1/2}_{\phi,\psi} \psi | \Delta^{-2it(1/p_0' -1/p_1')}_{\phi,\psi} \Delta_{\phi,\psi}^{1/p_0'-1/2} u^* G(it) \rangle | \\
\le & \|\Delta^{1/2}_{\phi,\psi} \psi\| \ \| \Delta_{\phi,\psi}^{1/p_0'-1/2} u^* G(it) \| \\
=& \|\phi\| \ \| \Delta_{\phi,\psi}^{1/2-1/p_0} u^* G(it) \| \\
\le& \sup\{ \| \Delta_{\chi,\psi}^{1/2-1/p_0} u^* G(it) \| : \| \chi \| = 1\} \\
=& \| u^* G(it) \|_{p_0,\psi} \le \| u^* \| \| G(it) \|_{p_0,\psi} = \| G(it) \|_{p_0,\psi}
\end{split}
\een
using in the last line the variational characterization of the $L_p$-norms and \cite{AM}, lem. 4.4. A similar chain of inequalities 
also gives $|f(1/2+it)| \le \| G(1/2+it) \|_{p_1,\psi}$. The rest follows from Hirschman's improvement as in the previous case. 
\medskip

(b) In the remaining part of the proof, we remove the faithful condition on the state $|\psi\rangle$.
Suppose that $\omega_\psi$ is non-faithful. 
For $\sigma$-finite $\cM$, there exists some cyclic and separating vector $|\eta\rangle$ for $\cM$ and we put 
\ben\label{rhorep}
\omega_{\psi_\epsilon} = (1-\epsilon) \ \omega_\psi + \epsilon \ \omega_\eta
\een
so that $|\psi_\epsilon\rangle \in \sP^\natural_\cM$ is now faithful for $\cM$ (and $\cM'$). 
The proof is then completed by the following lemma because we can apply part (a1),(a2) to the faithful state 
$|\psi_\epsilon\rangle$ and obtain b) by taking the limit $\epsilon \to 0$ and using the dominated convergence theorem
under the integral.

\begin{lemma}\label{monlp}
Let $\omega_\psi, \omega_\eta \in \sS(\cM)$, and let $\omega_{\psi_\epsilon} = (1-\epsilon) \ \omega_\psi + \epsilon \ \omega_\eta$. 
Then $\lim_{\epsilon \to 0+} \| \zeta \|_{p,\psi_\epsilon} =\| \zeta \|_{p,\psi}$ for any $p \ge 1$ and $|\zeta\rangle \in \sH$. 
\end{lemma}

\begin{proof}
(1) Case $p \ge 2$:
Clearly $\omega_{\psi_\epsilon} \ge (1-\epsilon) \omega_\psi$, from which it follows that $\Delta_{\phi,\psi_\epsilon} \le (1-\epsilon)^{-1} \Delta_{\phi,\psi}$
and thus by L\" owner's theorem \cite{Hansen}, $\Delta_{\phi,\psi_\epsilon}^{\alpha} \le (1-\epsilon)^{-\alpha} \Delta_{\phi,\psi}^{\alpha}$ for $\alpha \in [0,1]$, so by the variational definition of the $L_p$ norm (appendix):
\ben
\| \zeta \|_{p,\psi_\epsilon} \le (1-\epsilon)^{(1/p)-(1/2)} \| \zeta \|_{p,\psi} \quad \text{for $p\ge 2$.} 
\een
Therefore, by choosing $\epsilon > 0$ sufficiently small, we can achieve that 
\ben
\label{eq1}
\| \zeta \|_{p,\psi_\epsilon} - \| \zeta \|_{p,\psi} < \delta 
\een
for any given $\delta>0$. To get a similar inequality in the reverse direction, we use the following lemma.

\begin{lemma}\label{modconv}
Let $\omega_\psi, \omega_\eta, \omega_{\psi_n}, \omega_{\eta_n} \in \sS(\cM)$ be such that $\lim_n \| \omega_\psi -\omega_{\psi_n} \| = 0, 
\lim_n \| \omega_\eta -\omega_{\eta_n} \| = 0$ and such that $\omega_{\eta_n} \le C \omega_\eta, \omega_\psi \le C \omega_{\psi_n}$ for some $C<\infty$ and all $n$. Then
\ben
\lim_n \|(\Delta_{\eta,\psi}^{\alpha/2} -\Delta_{\eta_n,\psi_n}^{\alpha/2} ) \zeta \| = 0
\een
for any $\alpha \in [0,1), |\zeta\rangle \in \sD(\Delta_{\eta,\psi}^{\alpha/2})$.
\end{lemma}
\begin{proof}
We use the shorthands $\Delta = \Delta_{\eta,\psi}, \Delta_n = \Delta_{\eta_n,\psi_n}$. Without loss of generality $\alpha>0$.
To deal with the powers, we employ the standard formula
\ben
X^{\alpha} = \frac{\sin( \pi \alpha )}{\pi} \int_0^\infty d\lambda \, \lambda^{\alpha} \left[
\lambda^{-1} - (\lambda + X)^{-1}
\right]
\een
for $\alpha \in (0,1), X\ge 0$. We use this with $X=\Delta^{1/2}$ and $=\Delta_n^{1/2}$ giving us that
\ben
\begin{split}
&\| (\Delta^{\alpha/2} -\Delta_n^{\alpha/2} ) \zeta \| \\
\le & \  \int_{0}^\infty d\lambda \, \lambda^{\alpha-1} 
\left \|
\left[
(1 + \lambda\Delta^{-1/2})^{-1} - (1 + \lambda\Delta_n^{-1/2})^{-1}
\right] \zeta
\right \| . 
\end{split}
\label{prevlim}
\een
In the rest of the proof we denote by $c$ any constant depending only on $\alpha,C$.
We split the integration domain into three parts: $(0,\delta), (\delta, L), (L, \infty)$. 

(i) Range $(0,\delta)$: In this range, we use
\ben
\begin{split}
& \int_{0}^\delta d\lambda \, \lambda^{\alpha-1}
\left\| \left[
(1 + \lambda\Delta^{-1/2})^{-1} - (1 + \lambda\Delta_n^{-1/2})^{-1}
\right] \zeta
\right \| 
\\
= & \ \int_{0}^\delta d\lambda \, \lambda^{\alpha}
\left\| \left[
(\lambda + \Delta^{1/2})^{-1} - (\lambda + \Delta_n^{1/2})^{-1}
\right] \zeta
\right \| 
\\
\le & \ \int_{0}^\delta d\lambda \, \lambda^{\alpha} \left\{
\left\| 
(\lambda + \Delta^{1/2})^{-1} \zeta \right\| + \left\| (\lambda + \Delta_n^{1/2})^{-1}
 \zeta
\right \| \right\} 
\\
\le 2 \|\zeta\|
\int_{0}^\delta d\lambda \, \lambda^{\alpha-1} = c \| \zeta \| \delta^{\alpha}
\end{split}
\label{35}
\een
using that $\Delta, \Delta_n \ge 0$.  

(ii) Range $(\delta, L)$: By \cite{Araki2}, II, lem. 4.1, 
\ben
\left \|
\left[
(\lambda + \Delta^{1/2})^{-1} - (\lambda + \Delta_{n}^{1/2})^{-1}
\right] \zeta
\right \| \to 0 \quad \text{as $n \to \infty$, when $\lambda > 0$.}
\een
and the convergence is uniform for $\lambda$ in the 
compact set $[\delta, L]$.

(iii) Range $(L,\infty)$. The domination assumption gives $\Delta_n\le  C^2 \Delta$. The function $\RR_+ \owns x \mapsto (\lambda + x^{-1/2})^{-2}$
is operator monotone, thus by by L\" owner's theorem \cite{Hansen}:
\ben
\| (1 + \lambda\Delta_n^{-1/2})^{-1} \zeta\| = \langle \zeta | (1 + \lambda\Delta_n^{-1/2})^{-2} \zeta \rangle^{1/2}
\le  \langle \zeta | (1 + \lambda C^{-1} \Delta^{-1/2})^{-2} \zeta \rangle^{1/2}.
\een
and since $C \ge 1$ trivially 
\ben
\| (1 + \lambda\Delta^{-1/2})^{-1} \zeta\| = \langle \zeta | (1 + \lambda\Delta^{-1/2})^{-2} \zeta \rangle^{1/2}
\le  \langle \zeta | (1 + \lambda C^{-1} \Delta^{-1/2})^{-2} \zeta \rangle^{1/2}.
\een
Using these inequalities under the integral \eqref{prevlim} gives:
\ben
\begin{split}
& \int_{L}^\infty d\lambda \, \lambda^{\alpha-1} 
\left \|
\left[
(1 + \lambda\Delta^{-1/2})^{-1} - (1 + \lambda\Delta_n^{-1/2})^{-1}
\right] \zeta
\right \| \\
\le & \
\int_{L}^\infty d\lambda \, \lambda^{\alpha-1} \left\{
\left \|
(1 + \lambda\Delta^{-1/2})^{-1} \zeta
\right\| +
\left\| (1 + \lambda\Delta_n^{-1/2})^{-1}
 \zeta
\right \|
\right\} \\
\le & \
2 \int_{L}^\infty d\lambda \, \lambda^{\alpha-1} 
\langle \zeta | (1 + \lambda C^{-1} \Delta^{-1/2})^{-2} \zeta \rangle^{1/2} \\
\le & \
cL^{-\alpha/2} \left\{ \int_L^\infty 
d\lambda \, \lambda^{-1+\alpha} \ \langle \zeta | (1 + \lambda C^{-1} \Delta^{-1/2})^{-2} \zeta \rangle
\right\}^{1/2} \\
= & \
cL^{-\alpha/2}\bigg\{  \langle \zeta | f(C\Delta^{1/2}) \zeta\rangle \bigg\}^{1/2} \le cL^{-\alpha/2} \| \Delta^{\alpha/4} \zeta \|,
\end{split}
\een
uniformly in $n$.
Here we have applied Jensen's inequality to the probability measure $L^{\alpha} \lambda^{-1-\alpha} d\lambda$ on $(L,\infty)$ in the third step. 
We have also defined/estimated the non-negative function
\ben
f(x)=\int_L^\infty d\lambda \, \lambda^{-1+\alpha} (1+x^{-1}\lambda)^{-2} \le c x^\alpha. 
\een
Applying standard subharmonic analysis to the subharmonic function $z \mapsto \log \| \Delta^{\alpha z/2} \|$ in the strip $0 \le \Re z \le 1$, 
we have $ \| \Delta^{\alpha/4} \zeta \|^2 \le \| \zeta \| \| \Delta^{\alpha/2} \zeta\|$, giving  
\ben
\int_{L}^\infty d\lambda \, \lambda^{\alpha-1} 
\left \|
\left[
(1 + \lambda\Delta^{-1/2})^{-1} - (1 + \lambda\Delta_n^{-1/2})^{-1}
\right] \zeta
\right \| \le c( L^{-\alpha} \| \zeta \| \| \Delta^{\alpha/2} \zeta\| )^{1/2}.
\label{47}
\een
\medskip

Now we choose $\delta, L$ so small/large that the contributions from (i), (iii), i.e. \eqref{35}, \eqref{47} are $<\epsilon/3$ each (independently of $n$) and then $n$ so large that 
the contribution (ii) from $(\delta, L)$ is $<\epsilon/3$. Then the integral \eqref{prevlim} is $<\epsilon$ by (i), (ii), (iii), and the proof is complete.
\end{proof}

We can now complete the proof of lem. \ref{monlp}. 
We can pick a unit $|\phi\rangle$ such that $\| \zeta \|_{\psi,p} \le \| \Delta_{\phi,\psi}^{(1/2)-(1/p)} \zeta \| + \delta/2$ by 
the variational definition of the $L_p$ norm for $p \ge 2$. 
Lem. \ref{modconv} and the triangle inequality 
shows that there is an $\epsilon > 0$ such that  
\ben
\begin{split}
\| \zeta \|_{p,\psi} \le& \| \Delta_{\phi,\psi}^{(1/2)-(1/p)} \zeta \| + \delta/2 \\
\le& \| \Delta_{\phi,\psi_\epsilon}^{(1/2)-(1/p)} \zeta \|  + \| (\Delta_{\phi,\psi}^{(1/2)-(1/p)} -\Delta_{\phi,\psi_\epsilon}^{(1/2)-(1/p)} ) \zeta \| + \delta/2 \\
\le& \sup \{ \| \Delta_{\chi,\psi_\epsilon}^{(1/2)-(1/p)} \zeta \| : |\chi\rangle \in \sH, \| \chi \| = 1\} + \delta\\
=&  \| \zeta \|_{p,\psi_\epsilon} + \delta, 
\end{split}
\een
and this together with \eqref{eq1} gives $| \ \| \zeta \|_{p,\psi} - \| \zeta \|_{p,\psi_\epsilon} \ |<2\delta$. Since $\delta$ is arbitrarily small, 
the proof of lem. \ref{monlp} is complete when $p \ge 2$.

\medskip
(2) Case $1 \le p \le 2$: This proof has already appeared in paper I and is only included for convenience. 
Since by \eqref{rhorep} $\omega_{\psi_\epsilon} /(1-\epsilon) > \omega_\psi$, it now follows similarly as in part (1) of this proof that
 \begin{equation}
 \label{*}
 \left\| \zeta \right\|_{p,\psi} 
\leq  (1-\epsilon)^{(1/p)-(1/2)} \left\| \zeta \right\|_{p,\psi_\epsilon} \quad \text{for $1 \le p \le 2$}.
 \end{equation}
The $L_p$-norms $ \left\| \zeta \right\|_{p,\psi}^p$ may be considered for fixed $|\zeta\rangle$ as 
functionals of the state $\omega_\psi$, and as such they are convex. Indeed, let $D_s(\omega_\zeta' | \omega_\psi')$
be the sandwiched relative Renyi entropy relative between two functionals $\omega_\zeta',\omega_\psi'$ on $\cM'$
induced by vectors $|\zeta\rangle, |\psi\rangle$, related to the $L_p$-norms by $D_s(\omega_\zeta' | \omega_\psi')
=(s-1)^{-1} \log  \left\| \zeta \right\|_{2s,\psi}^{2s}$. The data processing inequality for this quantity (see e.g. \cite{Berta2}, thm. 14) in combination with standard arguments as in e.g. \cite{mueller}, proof of prop. 1 implies joint convexity in $\omega_\zeta',\omega_\psi'$.
This gives in combination with \eqref{rhorep} that (for $p=2s$)
\ben
\label{**}
\left\| \zeta \right\|_{p,\psi_\epsilon} \le (1-\epsilon) \left\| \zeta \right\|_{p,\psi} + \epsilon \left\| \zeta \right\|_{p,\eta}.
\een
Combining \eqref{*} with \eqref{**} implies the statement of lem. \ref{monlp} in the case $1 \le p \le 2$.
\end{proof}

\medskip
This completes the proof of lem. \ref{lem:hirsch}.
\end{proof}

\section{Multi-trace inequalities for von Neumann algebras}
\label{sec:3}

As applications of lem. \ref{lem:4} we now prove various inequalities that reduce to ''multi-trace inequalities'' in the case of 
finite type I factors. For simplicity, it will be assumed that $\omega_\psi$ is a faithful state on the von Neumann algebra $\cM$, meaning $\omega_\psi(m^*m)=0$ implies $m=0$ for all $m \in \cM$. 

\begin{corollary}\label{cor:1}
Let $a_1, \dots, a_n \in \cM_+$, $r \in (0,1], p \ge 2$. Then 
\ben
\frac{1}{r} \log \| a_1^r \cdots a^r_n \psi \|_{p/r,\psi}
\le \int_\RR dt \, \beta_{r/2}(t) \, \log \| a_1^{1+it} \cdots a^{1+it}_n \psi \|_{p,\psi} . 
\een
\end{corollary}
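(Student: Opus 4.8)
The plan is to deduce the inequality from the Araki--Masuda interpolation bound of lem.~\ref{lem:hirsch} (which itself rests on the scalar Hirschman estimate lem.~\ref{lem:4}) applied to a single, explicitly chosen analytic family of vectors. I would work on the strip $\bS_{1/2}$ with the $\mathcal H$-valued function
\[
|G(z)\rangle = a_1^{2z}\, a_2^{2z}\cdots a_n^{2z}\,|\psi\rangle ,
\]
the complex powers $a_j^{2z}$ being defined by the Borel functional calculus (with $0^{2z}:=0$ on the kernel of $a_j$). I would interpolate at $\theta=r/2$, which lies in $(0,1/2)$ for $r\in(0,1)$, and choose the Araki--Masuda indices $p_0=\infty$ and $p_1=p$. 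With these choices the index relation $\tfrac{1}{p_\theta}=\tfrac{1-2\theta}{p_0}+\tfrac{2\theta}{p_1}$ of lem.~\ref{lem:hirsch} gives $p_\theta=p/r$, and at the interpolation point $|G(\theta)\rangle=a_1^{r}\cdots a_n^{r}|\psi\rangle$, so that $\|G(\theta)\|_{p_\theta,\psi}$ is precisely the norm on the left of the corollary.

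Before invoking lem.~\ref{lem:hirsch} I would verify its hypotheses. Since each $a_j\in\mathcal M_+$ is bounded, $z\mapsto a_j^{2z}$ is holomorphic on $\{\Re z>0\}$ with $\|a_j^{2z}\|=\|a_j\|^{2\Re z}$, so $|G(z)\rangle$ is holomorphic on $\bS_{1/2}$ and uniformly bounded on its closure by $\prod_j\|a_j\|\cdot\|\psi\|$. The only delicate point here is the behaviour of the imaginary powers on the kernels of the $a_j$; I would circumvent it by first establishing the estimate for the strictly positive elements $a_j+\epsilon$ and then letting $\epsilon\to0$, using the continuity of the $L_p$-norms (in the spirit of lem.~\ref{monlp}) together with dominated convergence under the $t$-integral.

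The crux of the argument is the left boundary $\Re z=0$. There $|G(it)\rangle=b_t\,|\psi\rangle$ with $b_t=a_1^{2it}\cdots a_n^{2it}\in\mathcal M$; since $\lambda\mapsto\lambda^{2it}$ has modulus $1$ on $(0,\infty)$, each $a_j^{2it}$ is a partial isometry, so $b_t$ is a contraction, $\|b_t\|\le1$. Using that the Araki--Masuda $L_\infty$-norm is the operator norm, i.e. the contraction bound $\|m\psi\|_{\infty,\psi}\le\|m\|\,\|\psi\|_{\infty,\psi}$ of \cite{AM}, lem.~4.4, together with $\|\psi\|_{\infty,\psi}=1$, I obtain $\|G(it)\|_{\infty,\psi}\le1$ and hence $\ln\|G(it)\|_{\infty,\psi}\le0$. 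Because $\alpha_{r/2}(t)\ge0$ and $1-2\theta=1-r\ge0$, the entire lower-boundary contribution in lem.~\ref{lem:hirsch} is non-positive and may simply be discarded; this is exactly what forces the choice $e(z)=2z$ (any nonzero real part at $\Re z=0$ would spoil the contraction property).

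Discarding that term leaves only the upper boundary, where $|G(\tfrac12+it)\rangle=a_1^{1+2it}\cdots a_n^{1+2it}|\psi\rangle$ (the value of the exponent $2z$ at $z=\tfrac12+it$), and lem.~\ref{lem:hirsch} then reduces to
\[
\ln\|a_1^{r}\cdots a_n^{r}\psi\|_{p/r,\psi}\ \le\ r\int_{\mathbb R}dt\,\beta_{r/2}(t)\,\ln\|a_1^{1+2it}\cdots a_n^{1+2it}\psi\|_{p,\psi},
\]
and dividing by $r$ gives the stated inequality. The endpoint $r=1$, excluded above because then $\theta=\tfrac12$ lies on the boundary of $\bS_{1/2}$, follows by continuity: as $r\uparrow1$ the probability density $\beta_{r/2}(t)\,dt$ concentrates at $t=0$ and both sides converge to $\ln\|a_1\cdots a_n\psi\|_{p,\psi}$. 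I expect the main obstacles to be the two technical points flagged above — controlling the imaginary powers $a_j^{2it}$ on the kernels of non-invertible $a_j$ (handled by the $\epsilon$-regularisation and the norm-continuity lemma) and verifying the $L_\infty$ contraction bound that lets the left-boundary term drop — rather than the interpolation step itself, which is supplied ready-made by lem.~\ref{lem:hirsch}.
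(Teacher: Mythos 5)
Your proof is correct and follows essentially the same route as the paper's own: the identical choice $G(z)=a_1^{2z}\cdots a_n^{2z}|\psi\rangle$ with $p_0=\infty$, $p_1=p$, $\theta=r/2$ in lem.~\ref{lem:hirsch}, and the same key step of discarding the lower-boundary term via the isometric identification $L_\infty(\cM,\psi)\owns a|\psi\rangle\mapsto a\in\cM$ and the contraction bound of \cite{AM}, lem.~4.4. Your additional care (treating non-invertible $a_j$, whose imaginary powers are only partial isometries, by $\epsilon$-regularisation, and treating the endpoint $r=1$ where $\theta=1/2$ lies on the boundary of the strip) patches points the paper's proof glosses over --- there the $a_k^{2it}$ are simply declared unitary --- and the exponent $1+2it$ you obtain on the upper boundary is exactly what the paper's argument also produces, the factor of two relative to the stated $1+it$ being a cosmetic discrepancy present in the paper itself.
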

\begin{proof}
We choose $p_1=p, p_0=\infty, \theta = r/2$ and
\ben
G(z)=a_1^{2z} \cdots a^{2z}_n |\psi \rangle
\een
in lem. \ref{lem:4}. Then $\|G(z)\|$ is uniformly bounded on $\bS_{1/2}$
and $p_\theta = p/r$.  At the lower boundary of the strip:
\ben
\| G(it) \|_{p_0,\psi} = \| a_1^{2it} \cdots a^{2it}_n \psi\|_{\infty,\psi} = \| a_1^{2it} \cdots a^{2it}_n \| = 1
\een
because $a_k^{2it}$ are unitary operators (using the isomeric identification of $L_\infty(\cM, \psi) \owns a|\psi\rangle \mapsto a \in \cM$
proven in \cite{AM}.) Thus the term from the lower boundary does not contribute and we obtain the statement.
\end{proof}

Another corollary of a similar nature is:

\begin{corollary}\label{cor:2}
(Araki-Lieb-Thirring inequality)
For $r \ge 2, |\psi\rangle,|\zeta\rangle \in \sH$ there holds 
\ben
\| \zeta \|_{r,\psi}^2 \le 
\|\Delta^{r/4}_{\zeta,\psi} \psi \|^{4/r}_{}.
\een
\end{corollary}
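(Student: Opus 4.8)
The plan is to apply the interpolation Lemma~\ref{lem:hirsch} to a one-parameter analytic family built directly out of the relative modular operator, in close analogy with Corollary~\ref{cor:1}. Concretely, I would take
\ben
G(z) = \Delta_{\zeta,\psi}^{(r/2)z}|\psi\rangle
\een
and run Lemma~\ref{lem:hirsch} with the H\"older data $p_0=\infty$, $p_1=2$ and interpolation parameter $\theta=1/r\in(0,1/2)$ (the borderline $r=2$ being the identity $\|\zeta\|_{2,\psi}^2=\|\Delta_{\zeta,\psi}^{1/2}\psi\|^2$). Then $1/p_\theta=(1-2\theta)/\infty+2\theta/2=\theta=1/r$, so $p_\theta=r$ and the interior value is $\|G(\theta)\|_{r,\psi}=\|\Delta_{\zeta,\psi}^{1/2}\psi\|_{r,\psi}$. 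Since $\Delta_{\zeta,\psi}^{1/2}|\psi\rangle=|\zeta\rangle$ for $|\zeta\rangle\in\sP^\natural_\cM$ (to which one reduces by the $L_p$-invariance used throughout), this interior value is exactly $\|\zeta\|_{r,\psi}$, the quantity to be bounded.

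The point is that both boundaries can be made as transparent as in Corollary~\ref{cor:1}. On the lower edge $z=it$ the exponent of $\Delta_{\zeta,\psi}$ is purely imaginary, and Tomita--Takesaki theory gives $\Delta_{\zeta,\psi}^{(r/2)it}|\psi\rangle=(D\omega_\zeta:D\omega_\psi)_{rt/2}|\psi\rangle$ with a Connes cocycle $(D\omega_\zeta:D\omega_\psi)_{rt/2}\in\cM$ of norm $\le 1$; hence, by the isometric identification of $L_\infty(\cM,\psi)$ with $\cM$ proven in \cite{AM}, $\|G(it)\|_{\infty,\psi}\le 1$, so $\ln\|G(it)\|_{\infty,\psi}\le 0$ and this boundary term (which enters with the non-negative weight $(1-2\theta)\alpha_\theta(t)$) may simply be dropped while preserving the inequality. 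On the upper edge $z=1/2+it$ one has $G(1/2+it)=\Delta_{\zeta,\psi}^{r/4}\Delta_{\zeta,\psi}^{(r/2)it}|\psi\rangle$, and since $\Delta_{\zeta,\psi}^{(r/2)it}$ is unitary and commutes with $\Delta_{\zeta,\psi}^{r/4}$ by functional calculus, the $L_2$-norm is the $t$-independent constant $\|G(1/2+it)\|_{2,\psi}=\|\Delta_{\zeta,\psi}^{r/4}\psi\|$. Feeding these into Lemma~\ref{lem:hirsch} and using $\int_\RR\beta_\theta(t)\,dt=1$ yields
\ben
\ln\|\zeta\|_{r,\psi}\le 2\theta\,\ln\|\Delta_{\zeta,\psi}^{r/4}\psi\|=\tfrac{2}{r}\,\ln\|\Delta_{\zeta,\psi}^{r/4}\psi\|,
\een
which is the claim after exponentiating and squaring.

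The main technical obstacle I expect is verifying the hypotheses of Lemma~\ref{lem:hirsch}, i.e.\ that $G(z)$ is holomorphic on $\bS_{1/2}$ and uniformly bounded on its closure. This reduces to the condition $|\psi\rangle\in\dom(\Delta_{\zeta,\psi}^{r/4})$: under it the spectral theorem gives $\|\Delta_{\zeta,\psi}^{(r/2)z}\psi\|^2=\int\lambda^{\,r\Re z}\,d\|E_\lambda\psi\|^2$, which is finite and bounded for $0\le\Re z\le 1/2$ by interpolating between the two finite endpoint values, and holomorphy on the open strip follows in the same way. If instead $|\psi\rangle\notin\dom(\Delta_{\zeta,\psi}^{r/4})$ then the right-hand side $\|\Delta_{\zeta,\psi}^{r/4}\psi\|$ is infinite and there is nothing to prove. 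The remaining bookkeeping---the H\"older exponents, the identification of $\|G(\theta)\|_{r,\psi}$ with $\|\zeta\|_{r,\psi}$, and the reduction to a faithful cyclic and separating $|\psi\rangle$ (handled exactly as in part (b) of Lemma~\ref{lem:hirsch} via the perturbation $\omega_{\psi_\epsilon}$)---is routine.
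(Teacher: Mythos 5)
Your proposal is correct and takes essentially the same route as the paper: its proof also sets $G(z)=\Delta^{rz/2}_{\zeta,\psi}|\psi\rangle$ with $p_0=\infty$, $p_1=2$, $\theta=1/r$ in the interpolation lemma, reduces to $|\zeta\rangle$ in the natural cone and to $\|\Delta^{r/4}_{\zeta,\psi}\psi\|<\infty$, kills the lower boundary by writing $\Delta^{irt/2}_{\zeta,\psi}\psi=u(rt/2)\psi$ with the Connes cocycle $u(t)=\Delta^{it}_{\zeta,\psi}\Delta^{-it}_{\psi,\psi}\in\cM$ and invoking the isometric $L_\infty(\cM,\psi)\cong\cM$ identification, and evaluates the upper boundary as the $t$-independent constant $\|\Delta^{r/4}_{\zeta,\psi}\psi\|$ before using $\int\beta_\theta(t)\,dt=1$. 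The only cosmetic differences are that you verify holomorphy and boundedness of $G$ by hand via the spectral theorem where the paper cites lem.~3 of \cite{Araki4}, and your closing remark about reducing to faithful $\psi$ is not needed since part (b) of the interpolation lemma already covers non-faithful states.
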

\begin{proof}
A proof for this has already been given in \cite{Berta2}, thm. 12, so the only point is to show an alternative proof.  
We may assume that $\|\Delta^{r/4}_{\zeta,\psi} \psi \| < \infty$, otherwise the statement is trivial. Also, we may assume 
without loss of generality that $|\zeta\rangle$ is in the natural cone. In lem. \ref{lem:4}, we take $G(z) = \Delta^{rz/2}_{\zeta,\psi} \psi$, 
$p_1 = 2, p_0 = \infty, \theta = 1/r$, so $p_\theta = r$. Then $G(z)$ is holomorphic and uniformly bounded in $\bS_{1/2}$, see e.g. 
lem. 3 of \cite{Araki4}. 

On the left side of lem. \ref{lem:4}  we 
obtain $\log \| \Delta_{\zeta,\psi}^{1/2} \psi \|_{r,\psi}^r = \log \| \zeta \|_{r,\psi}^r$.
We compute at the lower boundary of the strip:
\ben
\| G(it) \|_{p_0,\psi} = \| \Delta^{irt/2}_{\zeta,\psi}  \psi\|_{\infty,\psi} = \| \Delta^{irt/2}_{\zeta,\psi} \Delta^{-irt/2}_{\psi,\psi} \psi\|_{\infty,\psi} = 
\| u(rt/2) \psi\|_{\infty,\psi} = \| u(rt/2)\| = 1. 
\een
Here $u(t) = \Delta^{it}_{\zeta,\psi} \Delta^{-it}_{\psi,\psi}$ is the Connes cocycle which is a unitary from $\cM$ 
and we used again the isomeric identification of $L_\infty(\cM, \psi) \owns a|\psi\rangle \mapsto a \in \cM$
proven in \cite{AM}. Thus the term from the lower boundary does not contribute. At the upper boundary of the strip:
\ben
\| G(1/2 +it) \|_{p_1,\psi} = \| \Delta^{irt/2+r/4}_{\zeta,\psi}  \psi\|_{2,\psi} = \| \Delta^{r/4}_{\zeta,\psi}  \psi\|,
\een
which no longer depends upon $t$, using that the $L_2$ norm is equal to the Hilbert space norm \cite{AM} and 
that $\Delta^{it}_{\zeta,\psi}$ is a unitary operator. 
Since $\int dt \beta_\theta(t) = 1$ we obtain the statement.
\end{proof}

Let $h$ be a self-adjoint element of $\cM$ and $|\psi\rangle \in \sH$ a normalized state vector. Following Araki \cite{Araki5}, the non-normalized 
perturbed state $|\psi^h\rangle$ is defined by the absolutely convergent series
\ben \label{psih}
|\psi^h\rangle = \sum_{n=0}^\infty \int_0^{1/2} ds_1 \dots \int_0^{s_{n-1}} ds_n \, \Delta_\psi^{s_n}h\Delta^{s_{n-1}-s_n}_\psi h 
\dots \Delta^{s_{1}-s_2}_\psi h |\psi\rangle,
\een
which can also be written as $e^{(\log \Delta_\psi+h)/2}|\psi\rangle$ \cite{Araki4}. 
This technique of perturbations has been generalized to semi-bounded -- instead of bounded -- operators 
by \cite{Donald1}, see also \cite{Petz1993}, sec. 12. The perturbations, $h$ that would normally be in $\cM_{\rm s.a.}$ 
are in this framework generalized to so-called ``extended-valued upper bounded 
self-adjoint operators affiliated with $\cM$'', the space of which is called $\cM_{\rm ext}$. 
More precisely, $h \in \cM_{\rm ext}$ if 
\begin{enumerate}
\item[(i)] it is a linear, upper semi-continuous map $\sS(\cM) \owns \sigma \mapsto \sigma(h) \in \RR \cup \{\infty\}$, and
\item[(ii)] the set $\{ \sigma(h) : \sigma \in \sS(\cM) \}$ is bounded from above.
\end{enumerate}
For any ``operator'' $h \in \cM_{\rm ext}$, one shows that it is consistent to define:
\begin{definition}
(see \cite{Donald1}, thm. 3.1) 
If $h \in \cM_{\rm ext}$,
the perturbed state $\sigma^{h}$ of a normal state $\sigma \in \sS(\cM)$ , is given by the unique extremizer of the convex variational 
problem
\ben
c(\sigma,h) = \sup \{ \rho(h) - S(\rho | \sigma) : \rho \in \sS(\cM) \}
\een 
provided the sup is not $-\infty$. 
\end{definition}
The latter is certainly the case if $h \in \cM_{\rm s.a.}$ is an ordinary self-adjoint 
element of the von Neumann algebra $\cM$, and in this case the above ``thermodynamic'' definition of the perturbed 
state is up to normalizations equivalent to Araki's ``perturbative'' definition \eqref{psih}:
\ben
c(\sigma,h) = \log \| \eta^h \|^2, \quad \sigma^h(m) = \langle \eta^h | m  | \eta^h\rangle/\| \eta^h\|^2,
\een
wherein $|\eta\rangle$ is a vector representer of the state $\sigma$, see \cite{Donald1}, ex. 3.3. 
Furthermore, $h \in \cM_{\rm ext}$ has the spectral decomposition \cite{Donald1}, prop. 2.13 (B) 
\ben
\label{specrep}
h = \int_{-\infty}^{c} \lambda E_{h}(d\lambda) - \infty \cdot q . 
\een
Here, $q \in \cM$ is a projector onto the subspace where $h$ is $-\infty$, 
and the measure $E_h(d\lambda)$ takes values in the projections in 
$(1-q)\cM(1-q)$, so it commutes with $q$.

\begin{corollary}
\label{cor:3}
(Generalized Golden-Thomson inequality)
For $h_i \in \cM_{\rm ext}, |\psi \rangle \in \sH, \|\psi\|=1$ there holds 
\ben
\log \| \psi^{h_1+\dots+h_k} \|^2 \le  \int_\RR dt \, \beta_{0}(t) \, \log \left\{ \| \prod_{j=1}^k e^{(1/2+it)h_j}  \psi \| \ \| \prod_{j=k}^1 e^{(1/2-it)h_j}  \psi \| \right\}.
\een
\end{corollary}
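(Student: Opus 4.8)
The plan is to apply the interpolation Lemma~\ref{lem:hirsch} to an ordered product of the operators $e^{zh_j}$, to push the interpolation index to the endpoint $p\to\infty$, and to recover the perturbed vector $\psi^{h_1+\dots+h_k}$ in that limit by a Lie--Trotter argument. I first treat bounded $h_j\in\cM_{\rm s.a.}$ and only at the end pass to $\cM_{\rm ext}$ by spectral truncation; since $\omega_\psi$ is faithful (the standing assumption of this section), $\Delta_\psi$ and the Araki perturbation $|\psi^h\rangle=e^{(\log\Delta_\psi+h)/2}|\psi\rangle$ are well defined.

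Set $G(z)=\prod_{j=1}^{k} e^{zh_j}|\psi\rangle$. For bounded $h_j$ this is entire in $z$ with $\|G(z)\|\le\|\psi\|\prod_{j} e^{({\rm Re}\,z)\|h_j\|}$, hence holomorphic and uniformly bounded on $\bS_{1/2}$. I apply Lemma~\ref{lem:hirsch} with $p_0=\infty$, $p_1=2$ and $0<\theta<1/2$, for which $1/p_\theta=\theta$, i.e.\ $p_\theta=1/\theta$. At the lower boundary $G(it)=a|\psi\rangle$ with $a=\prod_{j} e^{ith_j}\in\cM$ a product of unitaries, so by the isometric identification $L_\infty(\cM,\psi)\cong\cM$ one has $\|G(it)\|_{\infty,\psi}=\|a\|=1$ and the $\alpha_\theta$-term drops out. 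At the upper boundary $\|G(1/2+it)\|_{2,\psi}=\|F(t)\psi\|$, where $F(t):=\prod_{j=1}^{k} e^{(1/2+it)h_j}$ and we used that the $L_2$-norm is the Hilbert-space norm. Dividing the resulting inequality by $\theta$ gives
\[
\tfrac{1}{\theta}\log\|G(\theta)\|_{1/\theta,\psi}\ \le\ \int_{\RR} dt\,\beta_\theta(t)\,\log\|F(t)\psi\|^2 .
\]

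The crux is the limit $\theta\to0^+$ (equivalently $p_\theta\to\infty$) of the left-hand side, namely the claim
\[
\lim_{\theta\to0^+}\Big\|\textstyle\prod_{j=1}^{k} e^{\theta h_j}\,\psi\Big\|_{1/\theta,\psi}^{1/\theta}=\big\|\psi^{\,h_1+\dots+h_k}\big\|^2 .
\]
For $\cA=M_n(\CC)$ this is immediate from $\|\zeta\|_{p,\psi}^{p}=\tr[\zeta\,\omega_\psi^{2/p-1}\zeta^*]^{p/2}$: with $\zeta=\prod_j e^{\theta h_j}\psi$ the bracket equals $\big(\prod_j e^{\theta h_j}\big)\omega_\psi^{2\theta}\big(\prod_j e^{\theta h_j}\big)^{*}$, and raising to the power $1/(2\theta)$ and letting $\theta\to0$ a symmetric Lie--Trotter expansion collapses this to $e^{\log\omega_\psi+\sum_j h_j}$, whose trace is $\|\psi^{h_1+\dots+h_k}\|^2$. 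Establishing this identity for a general von Neumann algebra is the main obstacle, because neither $\log\Delta_\psi$ nor $\log\omega_\psi$ exists as an operator: one must instead use the (modified) Lie--Trotter product formula for modular operators of \cite{Araki4} together with the cutoff and analyticity control on the powers $\Delta_\psi^{z}$ already exploited in Lemma~\ref{monlp}, the regularization being needed here at $p\to\infty$ rather than at $p\to2$ as in papers I and II. Granting the claim, the left-hand side tends to $\log\|\psi^{h_1+\dots+h_k}\|^2$, while $\beta_\theta\to\beta_0$ with an integrable domination and $\log\|F(t)\psi\|^2$ is bounded in $t$ (as $F(t)$ and $F(t)^{-1}$ are uniformly bounded on the strip), so passage to the limit under the integral yields a first bound
\[
\log\big\|\psi^{\,h_1+\dots+h_k}\big\|^2\ \le\ \int_{\RR} dt\,\beta_0(t)\,\log\|F(t)\psi\|^2 .
\]

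To reach the symmetric form in the statement I repeat the argument with the reversed product $\widetilde G(z)=\prod_{j=k}^{1} e^{zh_j}|\psi\rangle$. The lower-boundary term vanishes as before; the Lie--Trotter limit of the left-hand side is unchanged, since it depends only on $\sum_j h_j$; and the upper boundary now produces $\big\|\prod_{j=k}^{1} e^{(1/2+it)h_j}\psi\big\|$. Using that $\beta_0$ is even and substituting $t\mapsto-t$, the resulting bound reads $\log\|\psi^{h_1+\dots+h_k}\|^2\le\int\beta_0(t)\,\log\|F(t)^*\psi\|^2\,dt$, where $F(t)^*=\prod_{j=k}^{1} e^{(1/2-it)h_j}$. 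Averaging this with the first bound and using $\tfrac12(\log a^2+\log b^2)=\log(ab)$ produces exactly the asserted inequality for bounded $h_j$. Finally, for $h_j\in\cM_{\rm ext}$ I insert the bounded truncations $h_j^\Lambda=\int_{-\Lambda}^{c}\lambda\,E_{h_j}(d\lambda)$: each $e^{(1/2\pm it)h_j}$ is already bounded (of norm $\le e^{c/2}$ and vanishing on the range of $q$) because $h_j$ is upper bounded, and $e^{(1/2\pm it)h_j^\Lambda}\to e^{(1/2\pm it)h_j}$ strongly, so the right-hand side converges by dominated convergence in $t$; on the left-hand side $\log\|\psi^{h^\Lambda}\|^2=c(\psi,h^\Lambda)\to c(\psi,h)=\log\|\psi^{h}\|^2$ (with $h=\sum_j h_j$, $h^\Lambda=\sum_j h_j^\Lambda$) by the variational characterization of the perturbed state and monotone convergence. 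This gives the inequality in full generality, the two genuinely new ingredients beyond papers I and II being the $p\to\infty$ Lie--Trotter identification of the left-hand side and the $\cM_{\rm ext}$-limit.
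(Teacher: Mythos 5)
Your overall architecture matches the paper's: interpolation via lem.~\ref{lem:hirsch} with $p_0=\infty$, $p_1=2$, the lower-boundary term killed because $\prod_j e^{ith_j}$ is unitary and $L_\infty(\cM,\psi)\cong\cM$ isometrically, a Lie--Trotter identification of the left-hand side as $\theta\to 0^+$, and a spectral-cutoff reduction from $\cM_{\rm ext}$ to $\cM_{\rm s.a.}$. But there is a genuine gap at exactly the step you yourself call ``the crux'' and then ``grant'': the identity $\lim_{\theta\to0^+}\|\prod_j e^{\theta h_j}\psi\|_{1/\theta,\psi}^{1/\theta}=\|\psi^{h_1+\dots+h_k}\|^2$. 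Verifying it for $M_n(\CC)$ and pointing to \cite{Araki4} and lem.~\ref{monlp} is not a proof; in the general von Neumann setting this identification \emph{is} the content of the corollary beyond the interpolation lemma, and the tools you name are not the ones that do the job (lem.~\ref{monlp} controls convergence of $\Delta^{\alpha}_{\eta_n,\psi_n}$ under perturbations of the states, not the $p\to\infty$ limit of Araki--Masuda norms). The paper sidesteps the impasse by a different choice of interpolating family, $G(z)=\Delta_\psi^{z/2}e^{zh_1}\cdots e^{zh_k}|\psi\rangle$, and the extra factor $\Delta_\psi^{z/2}$ is not cosmetic: by the Araki--Masuda polar decomposition (thm.~3 of \cite{AM}) and the $\sL_p^*$ product calculus (lem.~7.7 of \cite{AM}), it yields the exact identity $\|G(1/n)\|_{n,\psi}^n=\|(\Delta_\psi^{1/(2n)}e^{h_1/n}\cdots e^{h_k/n}\Delta_\psi^{1/(2n)})^{n/2}\psi\|^2$, which is precisely the symmetric arrangement to which Araki's Trotter formula \eqref{LT} applies, giving $\|\psi^{h_1+\dots+h_k}\|^2$ in the limit. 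With your $G(z)$ the same calculus (writing $a_n\psi=a_n\Delta_\psi^{1/n}\psi$) instead produces $\|(a_n\Delta_\psi^{2/n}a_n^{*})^{n/4}\psi\|^2$, which needs a further variant of the Trotter formula that you neither state nor prove. So your route could probably be completed, but as written its central analytic step is missing, and it is the hard one.

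Two further remarks. The price the paper pays for the $\Delta_\psi^{z/2}$ insertion is a factor $\Delta_\psi^{1/4}$ at the upper boundary, removed by the three-lines estimate $\|\Delta_\psi^{1/4}m\psi\|^2\le\|m^*\psi\|\,\|m\psi\|$; that is where the two oppositely ordered products in the statement come from. Your alternative --- running the argument for both orderings and averaging --- is a legitimate and arguably cleaner way to reach the symmetric form once the main claim is in place, so this part is a genuine (minor) improvement in exposition. On the other hand, your $\cM_{\rm ext}$ reduction has a flaw: the truncation $h_j^\Lambda=\int_{-\Lambda}^{c}\lambda\,E_{h_j}(d\lambda)$ omits the term $-\Lambda\cdot q$ appearing in the paper's cutoff $k_n=\int_{-n}^{c}\lambda\,E_k(d\lambda)-n\cdot q$; without it, $e^{(1/2\pm it)h_j^\Lambda}$ equals $1$ on $q\sH$ and cannot converge strongly to $e^{(1/2\pm it)h_j}$, which vanishes there. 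Also, the convergence $\log\|\psi^{h^\Lambda}\|^2\to\log\|\psi^{h}\|^2$ is not a free consequence of ``monotone convergence'' in the variational formula (a decreasing limit and a supremum do not commute without an argument); the paper invokes \cite{Donald1}, prop.~3.15, for exactly this point.
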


\begin{proof}
Case I). First we assume each $h_j \in \cM_{\rm s.a.}$, i.e. it is bounded.
We let 
\ben
G(z) = \Delta_\psi^{z/2} e^{zh_1} \dots e^{zh_k} |\psi\rangle. 
\een
By standard results of Tomita-Takesaki theory, 
this family of vectors is analytic on $\bS_{1/2}$ and uniformly bounded in the norm of $\sH$ on the closure, for instance by the maximum of $1$ and
$\prod_{i=1}^k \| e^{h_i} \|$ using a standard Phragmen-Lindel\" of type argument. In lem. \ref{lem:4}, we use this with
$p_1 = 2, p_0 = \infty, \theta = 1/n, n \in 2\mathbb N$, so $p_\theta = n$. 
At the lower boundary of $\bS_{1/2}$, we get $\|G(it)\|_{2,\psi} = 1$ -- the $L_2$-norm is the Hilbert space norm -- 
so this does not contribute. 
Keeping therefore only the term from the upper boundary, we have
\ben\label{eq:pr1}
\log \|\Delta_\psi^{1/(2n)} e^{h_1/n} \cdots e^{h_k/n} \psi\|_{\psi,n}^n \le 
\int_\RR dt \, \beta_{1/n}(t) \, \log \| \Delta_\psi^{1/4} e^{(1/2+it)h_1} \cdots e^{(1/2+it)h_k} \psi \|^2.
\een 
Now we consider the left side, putting $a_n = e^{h_1/n} \cdots e^{h_k/n}$. By \cite{AM}, thm. 3 (4), there exists\footnote{The cone 
$\sP_{\cM}^{1/(2n)}$ is defined as the closure of $\Delta_\psi^{1/(2n)} \cM_+ |\psi\rangle$ in $\sH$.} 
$|\phi_n\rangle \in L_n(\sH,\psi) \cap {\mathscr P}^{1/(2n)}_\cM$ such that 
\ben\label{eq:pr2}
\Delta_{\phi_n,\psi}^{1/n} |\psi\rangle = \Delta_{\psi}^{1/(2n)} a_n |\psi\rangle, \quad \| \phi_n \|^2 = 
\|\Delta_\psi^{1/(2n)} a_n \psi\|_{\psi,n}^n.
\een
It follows that
\ben\label{eq:pr3}
|\phi_n\rangle = J\Delta_{\phi_n,\psi}^{1/2} |\psi\rangle = J(\Delta_\psi^{1/(2n)} a_n \Delta_\psi^{1/(2n)})^{n/2}_{ }|\psi\rangle
\een
by a straightforward repeated application of \cite{AM}, lem. 7.7 (2); for the details see e.g. \cite{jaekel}, lem. 4.1.
Combining \eqref{eq:pr1}, \eqref{eq:pr2}, \eqref{eq:pr3}, we arrive at 
\ben\label{eq:pr4}
\log \|(\Delta_\psi^{1/(2n)} e^{h_1/n} \cdots e^{h_k/n} \Delta_\psi^{1/(2n)})^{n/2}_{} \psi\|^2 \le 
\int_\RR dt \, \beta_{1/n}(t) \, \log \| \Delta_\psi^{1/4} e^{(1/2+it)h_1} \cdots e^{(1/2+it)h_k} \psi \|^2.
\een 
We now take the limit $n \to \infty$ on the left side. Araki's version of the Lie-Trotter formula (suitably generalized to 
$k$ operators $h_1, \dots, h_k$, using that $e^{h_1/n} \cdots e^{h_k/n} = 1+n^{-1}(h_1+\dots+h_k) + O(n^{-2})$ where 
$\|O(n^{-2})\| \le Cn^{-2}$ for all $n>0$) see \cite{Araki4}, rem.s 1 and 2, establishes that 
\ben\label{LT}
s-\lim_n (\Delta_\psi^{1/(2n)} e^{h_1/n} \cdots e^{h_k/n} \Delta_\psi^{1/(2n)})^{n/4}_{} |\psi\rangle = |\psi^{h_1+\dots+h_k}\rangle
= e^{(\log \Delta_\psi + h_1+\dots+h_k)/2} |\psi\rangle, 
\een
so we get
\ben\label{eq:pr5}
\log \|\psi^{h_1+\dots+h_k} \|^2 \le 
\int_\RR dt \, \beta_{0}(t) \, \log \| \Delta_\psi^{1/4} e^{(1/2+it)h_1} \cdots e^{(1/2+it)h_k} \psi \|^2.
\een
On the integrand we finally use the following well-known application of the Hadamard three lines theorem ($0 \le \alpha < 1/2, m \in \cM$), 
\ben
\|\Delta_\psi^\alpha m \psi \| \le \| \Delta_\psi^{1/2} m \psi \|^{2\alpha} \| m \psi \|^{1-2\alpha} =  \| m^* \psi \|^{2\alpha} \| m \psi \|^{1-2\alpha}
\een
using that $z\mapsto \log \| \Delta_\psi^z m \psi \|$ is subharmonic on $\bS_{1/2}$. Using this with $\alpha = 1/4, m=e^{(1/2+it)h_1} \cdots e^{(1/2+it)h_k}$ gives the statement of the 
corollary.

Case II). The proof can be generalized to the case when $h_j \in \cM_{\rm ext}$ by reducing to the case I) via an approximation argument: 
Elements $k \in \cM_{\rm ext}$ can be approximated 
by bounded self-adjoint elements $k_n \in \cM_{\rm s.a.}$ by introducing a cutoff 
in the spectral decomposition \eqref{specrep}, as in  
\ben
\label{specrep}
k_n = \int_{-n}^{c} \lambda E_{k}(d\lambda) - n \cdot q \quad ;
\een
in fact one shows that $|\psi^{k_n}\rangle \to |\psi^k\rangle$ strongly, 
see \cite{Donald1}, prop. 3.15. We perform this cutoff for every $h_j$ obtaining a $h_{j,n}$. 
Since the desired inequality holds for $h_{j,n}$ by case I), the proof is completed by the fact that 
$e^{(1/2+it)h_{j,n}} \to e^{(1/2+it)h_{j}}$ as $n \to \infty$ strongly and uniformly in $t$ (as can be seen 
by decomposing $\sH = q_j\sH + (1-q_j)\sH$).
\end{proof}

\noindent
{\bf Examples:} 1) In the previous corollary we take $k=1, h_1=h$. Then the norm in the integrand no longer depends upon $t$ and we can use that $\int dt \beta_0(t) = 1$ to get:
\ben\label{GT}
\|\psi^h\| \le \| e^{h/2} \psi\|, 
\een
as shown previously by \cite{Araki4}. 

\medskip
\noindent
2) Finite-dimensional type I algebras. Let $\mathcal{A} = M_{n}({\mathbb C})$.
We will work in the standard Hilbert space ($\sH \simeq M_{n}({\mathbb C}) \simeq
{\mathbb C}^{n *} \otimes  \mathbb{C}^n$) and identify state functionals such as $\omega_\psi$ with density matrices
via $\omega_\psi(a) = \tr(a\omega_\psi)$. Vectors $|\zeta\rangle$ in $\sH$ are thus identified with matrices $\zeta \in M_n(\CC)$.
We have already mentioned that the $L_p(\cA,\psi)$-norms can be computed using the well known correspondence between these norms and the sandwiched relative entropy discussed in \cite{Berta2}:
$
\| \zeta \|_{p,\psi}^p = \tr(\zeta \rho_{\psi}^{2/p-1} \zeta^*)^{p/2}
$
where $|\zeta\rangle \in \sH$ is identified with a matrix $\zeta \in M_n(\CC)$ as described. Let $a_i$ be non-negative matrices. The multi-matrix inequality in cor. \ref{cor:1} then reads, 
when $\omega_\psi$ is the normalized tracial state $\omega_\psi(a)= \tr(a)/n$,
\ben
 \log \tr |a_1^{r} \cdots a_k^r|^{p/r} \le  \int_\RR dt \, \beta_{r/2}(t) \, \log \tr | a_1 a_2^{1+it} \cdots a^{1+it}_{k-1} a_k |^p ,
\een
which generalzes the Araki-Lieb-Thirring inequality (corresponding to $k=2$). This has been derived previously in 
\cite{wilde4,sutter2}, so our result can be seen as a generalization of these results to arbitrary von Neumann algebras. 
Cor. \ref{cor:2} is another generalization of this inequality which gives nothing new in the present case. 
Cor. \ref{cor:3} gives the following inequality. Under the above identification of vectors $|\psi\rangle \in \sH$ and matrices, the perturbed vector is
\ben
|\psi^h\rangle = |e^{\log \psi + h/2}\rangle
\een
(assuming $|\psi\rangle$ to be in the natural cone, i.e. self-adjoint and non-negative), and then choosing $|\psi = 1_n/\sqrt{n}\rangle$ as the vector representing the tracial state on $\cA$, 
we have
\ben
\log \tr e^{h_1+\dots+h_k}  \le  \int_\RR dt \, \beta_{0}(t) \, \log \tr | e^{(1/2)h_1} e^{(1/2+it)h_2} \cdots e^{(1/2+it)h_{k-1}} e^{(1/2)h_k}|^2 ,
\een
for any hermitian matrices $h_i$.
This reduces to the Golden-Thomson inequality for $k=2$,
\ben
\tr e^{h_1+h_2} \le \tr (e^{h_1} e^{h_2}),
\een
using that the trace in the integrand no longer depends on $t$ and $\int dt \beta_0(t) = 1$. For arbitrary number of matrices 
this is due to \cite{Sutter2}, who also explain the relation with Lieb's triple matrix inequality (for $k=3$).

\section{Improved DPI and recovery channels}
\label{sec:4}
\subsection{Relative entropy and measured relative entropy}
\label{subsec:2}

For the von Neumann algebra $\cA = M_n(\CC)$, the relative entropy between two states (density matrices) $\omega_\psi, \omega_\eta$ 
is defined by:
\ben
\label{eq:Srel}
S(\omega_\psi | \omega_\eta) = \tr(\omega_\psi \log \omega_\psi - \omega_\psi \log \omega_\eta).
\een
This may be expressed in terms of the logarithm of the relative modular operator in \eqref{modobj}, and this observation is the basis 
for Araki's approach \cite{Araki1,Araki2} to relative entropy for general von Neumann algebras. The main technical difference 
in the general case is that the individual terms in the above expression such as the von Neumann entropy 
$-\tr(\omega_\psi \log \omega_\psi)$ are usually infinite. Thus form a mathematical viewpoint, the relative- and 
not the absolute entropy is the primary concept.

Let $(\cM, J, \sP_\cM^\natural, \sH)$ be a von Neumann algebra in standard form acting on a Hilbert space $\sH$, 
with natural cone $\sP^\sharp_\cM$ and modular conjugation $J$. 
According to \cite{Araki1,Araki2}, if $\pi^\cM(\eta) \ge \pi^\cM(\psi)$, the relative entropy may be defined in terms of them by\footnote{The limit 
exists under this condition but may be equal to $+\infty$.} 
\ben
\label{eq:Sdef}
S(\psi | \eta) = -\lim_{\alpha \to 0^+} \frac{\langle \xi_\psi | \Delta^\alpha_{\eta, \psi} \xi_\psi \rangle-1}{\alpha} ,  
\een
otherwise, it is by definition infinite. Here, $|\xi_\psi\rangle$ denotes the unique representer of 
a vector $|\psi\rangle$ in the natural cone. The relative entropy only depends on the 
functionals $\omega_\psi, \omega_\eta$ on $\cM$, but not the choice of vectors $|\psi\rangle, |\eta\rangle$ that define these functionals. 
We will therefore use interchangeably the notations $S(\psi | \eta)=S(\omega_\psi | \omega_\eta)$. 
Araki's definition of $S(\omega_\psi | \omega_\eta)$ still satisfies the data processing inequality \eqref{dpi} \cite{Uhlmann1977} along with 
many other properties, see e.g. \cite{Petz1993}.

For $t \in \RR$,  the Connes-cocycle $(D\psi : D\eta)_{t}$ is the isometric operator from $\cM$ satisfying 
\ben
(D\psi : D\eta)_{t}  \pi^{\cM'}(\psi) = \Delta^{it}_{\psi,\psi} \Delta^{-it}_{\eta,\psi}.
\een
It only depends on the state functionals $\omega_\psi, \omega_\eta$.
In terms of the Connes-cocycle, the relative entropy \eqref{eq:Sdef} may also be defined as
\ben
S(\omega_\psi | \omega_\eta) \equiv S(\psi | \eta) = -i\frac{\dd}{\dd t} \omega_\psi( (D\eta : D\psi)_{t} ) |_{t=0}.  
\een
The last expression has the advantage that it does not require one to know the vector representative 
of $|\psi \rangle$ in the natural cone; in particular it shows that $S$ only depends on the state functionals.\footnote{The derivative exists whenever $S(\psi | \eta)<\infty$ \cite{Petz1993}, thm. 5.7.}

Later we will use the following variational expression for the relative entropy \cite{hiai}, prop. 1,
\ben
\label{eq:var}
S(\psi | \eta) = \sup_{h \in \cM_{\rm s.a.}} \{ \omega_\psi(h) - \log \| \eta^h \|^2 \}, 
\een
with $\cM_{\rm s.a.}$ the set of self-adjoint elements of $\cM$.
A related variational quantity is the {\bf ``measured relative entropy''}, $S_{\rm meas}$, defined as
\ben
\label{SMdef}
S_{\rm meas}(\psi | \eta) = \sup_{h \in \cM_{\rm s.a.}} \{ \omega_\psi(h) - \log \| e^{h/2} \eta \|^2 \}. 
\een
From the Golden-Thomson inequality \eqref{GT} we find
\ben
S_{\rm meas}(\psi | \eta) \le S(\psi | \eta).
\een
$S_{\rm meas}$ can also be written in terms of the classical relative entropy $S(\mu|\nu)$ (Kullback-Leibler divergence) of two probability measures
\ben
S(\mu | \nu) = \int d\mu \ \log \frac{d\mu}{d \nu} 
\een
as follows. Let $a \in\cM_{\rm s.a.}$ be a self-adjoint element of $\cM$. Then it has a spectral decomposition 
\ben
a = \int \lambda E_a(d\lambda)
\een
with an $\cM$-valued projection measure $E_a(d\lambda)$. Given $|\psi\rangle, |\eta \rangle \in \sH$, 
we get Borel measures $d\mu_{\psi,a} = \langle \psi | E_{a}(d\lambda) \psi \rangle$, and likewise for $|\eta\rangle$. 
Physically, these correspond to the probability distributions for measument outcomes of $a$ in the states $|\psi\rangle$ resp. $|\eta\rangle$. 
The relative entropy between these measures is defined (but can be $+\infty$) if ${\rm supp} \mu_{\eta,a} \subset {\rm supp} \mu_{\psi,a}$, wherein 
$d\mu_{\psi,a}/d\mu_{\eta,a}$ means the Radon-Nikodym derivative between the measures. We may perform the 
maximization in \label{SMdef} over $f(h)$ with\footnote{
More precisely, the space $L^\infty$ is defined relative to the measure $\mu_{h,\psi}$ relative to some 
faithful normal state $\psi \in \sS(\cM)$. Depending on the nature of this measure, ``$L^\infty$'' 
means either $\ell^\infty(\{1,\dots,n\}), \ell^\infty({\mathbb N})$ or $L^\infty(\RR)$ or a combination thereof, 
wherein the counting measure is understood in the first two cases, whereas the Lebesgue measure is understood 
in the last case.} $f \in L^\infty(\RR; \RR)$ and $h \in \cM_{\rm s.a.}$ because $f(h) \in \cM_{\rm s.a.}$. 
Maximizing first for fixed $h$ 
over $f$ and using ($=$ eq. \eqref{eq:Srel} in the commutative case)
\ben
\sup\left\{ \int f d\mu - \log \int e^f d\nu: f \in L^\infty(\RR;\RR) \right\} = S(\mu | \nu), 
\een
we can write the measured relative entropy in the following way:
\ben
\begin{split}
S_{\rm meas}(\omega_\psi | \omega_\eta) =& \sup\{ S(\mu_{h,\psi} | \mu_{h,\eta}) : h \in \cM_{\rm s.a.}\}\\
=& \sup \{ S(\omega_{\psi | \cC} | \omega_{\eta | \cC}) : \cC \subset \cM \ \ \text{a commutative von Neumann subalgebra} \}.
\end{split}
\een
This motivates the name ``measured relative entropy''. The second equality holds by
\cite{Petz1993}, prop. 7.13, for a related discussion see also  \cite{Berta}, lem. 1 
which corresponds to counting measures on the finite set $\{1,\dots,n\}$. 

For later we would like to know the relationship between $S_{\rm meas}$ and the fidelity, $F$. According to \cite{UhlmannFidelity}, the 
fidelity between two states $\omega_\eta, \omega_\psi \in \sS(\cM)$ on a von Neumann algebra $\cM$ in standard form may be defined as 
\ben
\label{supdef}
F(\omega_\psi | \omega_\eta) = \sup \{ |\langle \eta |u' \psi \rangle| : u' \in \cM', \|u'\|=1\}.
\een
It is related to the $L_1$-norm relative to $\cM'$ by $F(\omega_\psi | \omega_\eta) = \| \eta \|_{1,\psi,\cM'}$, see e.g. paper I, lem. 3 (1).
We claim:
\begin{proposition}\label{FS}
If $\omega_\eta \in \sS(\cM)$ is a faithful state on the von Neumann algebra $\cM$, 
then $S_{\rm meas}(\omega_\psi | \omega_\eta) \ge -\log F(\omega_\psi | \omega_\eta)^2$.
\end{proposition}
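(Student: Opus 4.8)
The plan is to use the measurement characterization of the measured relative entropy recorded just above the proposition,
\[
S_{\rm meas}(\omega_\psi|\omega_\eta) = \sup\{ S(\omega_{\psi|\cC}|\omega_{\eta|\cC}) : \cC\subset\cM\ \text{a commutative von Neumann subalgebra}\},
\]
and to combine it with two ingredients: an elementary classical inequality bounding a Kullback--Leibler divergence below by the classical fidelity, and a von Neumann algebraic Fuchs--Caves theorem identifying the quantum fidelity $F(\omega_\psi|\omega_\eta)$ with the infimum over all measurements of the classical fidelities of the measured distributions.

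First I would fix a commutative subalgebra $\cC\subset\cM$ generated by some $a\in\cM_{\rm s.a.}$ and pass to the measured probability distributions $d\mu_{\psi,a}=\langle\psi|E_a(d\lambda)\psi\rangle$ and $d\mu_{\eta,a}=\langle\eta|E_a(d\lambda)\eta\rangle$, for which the relative entropy on $\cC$ is the classical divergence $S(\mu_{\psi,a}|\mu_{\eta,a})$. The tangent-line estimate for the convex function $\alpha\mapsto\log\int (d\mu_{\psi,a}/d\mu_{\eta,a})^\alpha\,d\mu_{\eta,a}$ at $\alpha=1$ (equivalently, monotonicity of the classical R\'enyi divergences from order $1/2$ to order $1$) gives
\[
S(\mu_{\psi,a}|\mu_{\eta,a}) \ge -\log\left( \int \sqrt{d\mu_{\psi,a}\,d\mu_{\eta,a}} \right)^2 ,
\]
whose right side is $-\log$ of the square of the classical Bhattacharyya (fidelity) coefficient $F(\mu_{\psi,a}|\mu_{\eta,a})$ of the two measured distributions. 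Since $S_{\rm meas}(\omega_\psi|\omega_\eta)\ge S(\mu_{\psi,a}|\mu_{\eta,a})$, this already yields $S_{\rm meas}(\omega_\psi|\omega_\eta)\ge -\log F(\mu_{\psi,a}|\mu_{\eta,a})^2$ for every $a$.

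It then remains to take the supremum over $a$ and to recognize its value. Data processing for the fidelity gives $F(\mu_{\psi,a}|\mu_{\eta,a})\ge F(\omega_\psi|\omega_\eta)$ for every $a$, so the content is the reverse: the existence of (near-)optimal observables realizing $\inf_a F(\mu_{\psi,a}|\mu_{\eta,a}) = F(\omega_\psi|\omega_\eta)$. Granting this, $\sup_a\left(-\log F(\mu_{\psi,a}|\mu_{\eta,a})^2\right) = -\log\left(\inf_a F(\mu_{\psi,a}|\mu_{\eta,a})\right)^2 = -\log F(\omega_\psi|\omega_\eta)^2$, which is the claim. I expect this last identity --- a Fuchs--Caves theorem for $\sigma$-finite von Neumann algebras --- to be the main obstacle. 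In the matrix case the optimal measurement is the spectral measure of $\omega_\eta^{-1/2}(\omega_\eta^{1/2}\omega_\psi\,\omega_\eta^{1/2})^{1/2}\omega_\eta^{-1/2}$, which is exactly where faithfulness of $\omega_\eta$ is needed in order to invert $\omega_\eta$; for general $\cM$ one must build the noncommutative counterpart of this geometric-mean observable from the modular data of the pair $(\omega_\psi,\omega_\eta)$, verify that it is affiliated with $\cM$, and check that its measured fidelity converges to the overlap defining $F(\omega_\psi|\omega_\eta)$. Alternatively, one may cite such a measurement result and merely confirm that the definition \eqref{supdef} of $F$ coincides with Uhlmann's.
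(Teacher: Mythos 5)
Your first two steps coincide with the paper's own proof: the measurement characterization of $S_{\rm meas}$, and the classical bound $S(\mu|\nu)\ge -\log F(\mu|\nu)^2$ (the paper obtains it from Jensen's inequality applied to $-\log$, which is the same estimate as your R\'enyi monotonicity argument). The gap is in your final step, and it is not peripheral: producing observables in $\cM_{\rm s.a.}$ whose measured classical fidelity gets down to $F(\omega_\psi|\omega_\eta)$ is precisely the mathematical content of the proposition in the general von Neumann algebra setting, and you leave it as something to ``grant'' or to cite. The matrix-case optimizer $\omega_\eta^{-1/2}(\omega_\eta^{1/2}\omega_\psi\,\omega_\eta^{1/2})^{1/2}\omega_\eta^{-1/2}$ has no direct analogue when $\cM$ is of type III --- there are no density matrices to invert --- and an observable ``built from the modular data'' and affiliated with $\cM$ is exactly the object that must be constructed; the paper cites no off-the-shelf Fuchs--Caves theorem at this generality and instead proves what it needs by hand.

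Concretely, the paper fills this gap as follows: it takes the Araki--Masuda polar decomposition of $|\psi\rangle$ in $L_1(\cM',\eta)$, writing $|\psi\rangle = u'^*|\psi_+\rangle$ with $u'\in\cM'$ a partial isometry and $|\psi_+\rangle$ in the cone $\sP_{\cM'}^{1/2}$, which it identifies with the closure of $\cM_+|\eta\rangle$; it then picks $a_n\in\cM_+$ with $a_n|\eta\rangle \to u'|\psi\rangle$ strongly, and shows by a Powers--St\"ormer type continuity argument that the classical fidelities of the spectral measures of the $a_n$ converge to $\langle\eta|u'\psi\rangle$. Note the logical shortcut that makes this work without any Fuchs--Caves \emph{equality}: since $-\log$ is decreasing, it suffices to know that the limiting classical fidelity $\langle\eta|u'\psi\rangle$ is $\le F(\omega_\psi|\omega_\eta)$, and that is immediate from the definition \eqref{supdef} of $F$ as a supremum over contractions in $\cM'$. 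The ``hard'' achievability direction you flag as the main obstacle is thus never invoked (it follows a posteriori when combined with data processing, but is not needed for the bound). If you want to complete the argument along your own lines, the construction of the sequence $a_n$ above is the missing ingredient; the rest of your outline is sound.
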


\begin{proof}
We may assume at  that $|\eta\rangle$ is cyclic for $\cM$, for 
if not we can obtain an equivalent standard form of $\cM$ after a GNS-construction based on $\omega_\eta$ and work with that standard form. 
Without loss of generality, $|\eta\rangle \in \sP_{\cM}^{\sharp}$.
Consider in $L_1(\cM',\eta)$ the polar decomposition $|\psi\rangle = u'^*|\psi_+\rangle$ into a $u' \in \cM'$ such that 
$u'^{*} u'= \pi^{\cM'}(\psi) \le 1$
and $|\psi_+\rangle \in \sP_{\cM'}^{1/2}$, see \cite{AM}, thm. 3. By definition, the cone $\sP_{\cM'}^{1/2}$ is the 
closure of $\Delta_\psi^{\prime 1/2} \cM'_+|\eta\rangle$, which equals the closure of $\cM_+|\eta\rangle$, since 
$J\Delta_\psi^{\prime 1/2} a'|\eta\rangle = a'|\eta\rangle$ for $a' \in \cM'_+$, $J|\eta\rangle = |\eta\rangle$ and $J\cM'J=\cM$. 
Thus, there exists a sequence $\{a_n\} \subset \cM_+$ such that $\lim_n a_n|\eta\rangle = u'|\psi\rangle$ strongly, so 
\ben
\label{limit}
\lim_n \langle \eta |a_n \eta \rangle = \langle \eta |u' \psi \rangle \in \RR_+.
\een
Then, with $E_{a_n}(d\lambda)$ 
the spectral decomposition of $a_n$ and $d\mu_{a_n,\psi} = \langle \psi | E_{a_n}(d\lambda) \psi \rangle$, 
$d\mu_{a_n,\eta} = \langle \eta | E_{a_n}(d\lambda) \eta \rangle$, the definition of the measured relative entropy and Jensen's inequality
applied to the convex function $-\log$ yields
\ben\label{85}
S_{\rm meas}(\omega_\psi | \omega_\eta) \ge S(\mu_{a_n,\psi} | \mu_{a_n,\eta}) \ge -2 \log \int \left(
\frac{d\mu_{a_n,\psi}}{d\mu_{a_n,\eta}}
\right)^{1/2} d\mu_{a_n,\eta} = -2 \log F(\mu_{a_n,\psi} | \mu_{a_n,\eta} ),
\een
where the Radon-Nikodym derivative 
is defined since $|\eta\rangle$ is faithful. The strong limit $\lim_n a_n|\eta\rangle = u'|\psi\rangle$ and 
$d\mu_{a_n,\psi} = \langle u'\psi | E_{a_n}(d\lambda) u'\psi \rangle$ (because $u' \in \cM', u'^{*} u' = \pi^{\cM'}(\psi)$ and $E_{a_n}$ takes 
values in $\cM$) imply that $\|\mu_{a_n,\psi} - \mu_{a_n,a_n\eta}\|_1 \le \| \omega_\psi - \omega_{a_n\eta}\| \le \| \psi+a_n\eta\| \ \| \psi - a_n \eta \| \to 0$ as $n \to \infty$. By paper I, lem. 11 and \eqref{CS} applied to the commutative case, this gives that also 
\ben
| F(\mu_{a_n,\psi} | \mu_{a_n,\eta} ) - F(\mu_{a_n,a_n\eta} | \mu_{a_n,\eta} ) | \le \|\mu_{a_n,\psi} - \mu_{a_n,a_n\eta}\|_1^{1/2}  \to 0.
\een 
By definition,
\ben
\left( \frac{d\mu_{a_n,a_n \eta}(\lambda)}{d\mu_{a_n,\eta}(\lambda)}
\right)^{1/2} = \lambda  \quad \text{for $\lambda \in \RR_+$,}
\een
hence by \eqref{85}
\ben
\begin{split}
&S_{\rm meas}(\omega_\psi | \omega_\eta) \ge -2 \log \lim_n  \int \lambda d\mu_{a_n,\eta} = 
-2 \log \lim_n  \int \lambda \langle \eta| E_{a_n}(d\lambda) \eta \rangle \\
=& 
-2 \log \lim_n  \langle \eta| a_n \eta \rangle = -2 \log  \langle \eta| u' \psi \rangle = -2 \log | \langle \eta| u' \psi \rangle |.
\end{split}
\een
The right side is by definition $\ge -\log F(\omega_\psi | \omega_\eta)^2$ as $\|u'\| = 1, u' \in \cM'$, which concludes the proof. 
\end{proof}

\subsection{Petz recovery map}
\label{subsec:petz}

We now recall the definition of the Petz map in the case of general von Neumann algebras, discussed in more detail in \cite{Petz1993}, sec. 8.
Let $T: \cB \to \cA$ be a $^*$-preserving linear map between two von Neumann algebras $\cA, \cB$ in standard form acting on Hilbert 
spaces $\sH, \sK$. If 
\ben\label{eq:eqschwarz}
\begin{pmatrix}
\langle \zeta_1 | & \langle \zeta_1 | 
\end{pmatrix}
T \left(
\left[
\begin{matrix}
a  & b  \\
c  & d  
\end{matrix}
\right] 
\left[
\begin{matrix}
a^* & c^* \\
b^* & d^*
\end{matrix}
\right]
\right) 
\left(
\begin{matrix}
|\zeta_1\rangle \\
|\zeta_2 \rangle
\end{matrix}
\right)
\ge 0, \quad \forall |\zeta_i\rangle \in \sH, \quad T(1_\cB) = 1_\cA,
\een
and for all $a,b,c,d \in \cB$,
then $T$ is called 2-positive and unital. 
In the matrix inequality, we mean $T$ applied to each matrix element. By duality between $\cA$ and $\sS(\cA)$, 
$T:\cB \to \cA$ gives a corresponding map $\tilde T:\sS(\cA) \to \sS(\cB)$ by $\omega \mapsto \tilde T(\omega) := \omega \circ T$. 
For finite dimensional von Neumann algebras $\cA, \cB$ where state functionals are identified with density matrices 
through $\omega(a) = \tr(\omega a)$, we can think of $\tilde T$ as the linear operator on density matrices defined by
\ben
\tr \omega T(b) = \tr \tilde T(\omega) b \quad \forall b \in \cB.
\een
This operator $\tilde T$ is completely positive and trace-preserving. The quantum data processing inequality (DPI) \cite{Uhlmann1977} states that 
\ben
S(\omega_\psi | \omega_\eta) \ge S(\omega_\psi \circ T | \omega_\eta \circ T), 
\een
where the right side could also be written as $S(\tilde T(\omega_\psi) | \tilde T(\omega_\eta))$.

We recall the definition of the Petz-map. Let $|\eta_\cA\rangle$ be a cyclic and separating vector in the natural cone of a von Neumann algebra $\cA$ in standard form. Then the KMS scalar product on $\cA$ is defined as 
\ben
\langle a_1,a_2 \rangle_\eta = \langle \eta_\cA | a_1^* \Delta_\eta^{1/2} a_2 \eta_\cA \rangle . 
\een
 Let $\omega_\eta$ be the normal state functional on $\cA$ associated with $|\eta_\cA\rangle$. Then its pull-back $ \omega_\eta \circ T$ to $\cB$, which is also faithful\footnote{This follows from Kadison's inequality $T(b^*b) \ge T(b)^* T(b)$.} has a vector representative $|\eta_\cB\rangle \in \sK$ in the natural cone. So: 
\ben
\omega_\eta(a) = \langle \eta_\cA | a \eta_\cA \rangle, \quad 
\omega_\eta \circ T(b) = \langle \eta_\cB | b \eta_\cB \rangle.
\een
$|\eta_\cA\rangle$ resp. $|\eta_\cB\rangle$ give KMS scalar products for $\cA$ resp. $\cB$, which we can use to
define the adjoint $T^+:\cA \to \cB$ (depending on the choices of these vectors) of the normal, unital and 2-positive 
$T: \cB \to \cA$, which is again normal, unital, and 2-positive, see \cite{Petz1993} prop. 8.3. 
For finite dimensional matrix algebras $T^+$ corresponds dually to the linear operator $\tilde T^+$ acting on density matrices $\rho$ for 
$\cB$ given by
\ben
\tilde T^+(\rho) = \sigma_\cA^{1/2} T\left( \sigma^{-1/2}_\cB \rho \sigma^{-1/2}_\cB \right) \sigma_\cA^{1/2},
\een
wherein $\sigma_\cA$ is the density matrix of $|\eta_\cA\rangle$ and $\sigma_\cB = \tilde T(\sigma_\cA)$ for $|\eta_\cB\rangle$.
 The rotated Petz map, which we call $\alpha_{\eta,T}^t: \cA \to \cB$, 
is defined by conjugating this with the respective modular flows, i.e. 
\ben
\alpha_{\eta,T}^t = \varsigma_{\eta,\cB}^t \circ T^+ \circ \varsigma_{\eta,\cA}^{-t}
\een
where $\varsigma_{\eta,\cA}^t = {\rm Ad} \Delta_{\eta, \cA}^{it}$ is the modular flow for $\cA, |\eta_\cA\rangle$ etc. 
For finite dimensional matrix algebras, $\alpha_{\eta, T}^t$ gives by duality a linear operator $\tilde \alpha_{\eta,T}^t$ acting on density matrices $\rho$ for $\cB$, which is
\ben\label{recov}
\tilde \alpha_{\eta,T}^t(\rho) = \sigma_\cA^{1/2-it} T\left( \sigma^{-1/2+it}_\cB \rho \sigma^{-1/2-it}_\cB \right) \sigma_\cA^{1/2+it}.
\een
An equivalent definition of the rotated Petz map is:

\begin{definition}
Let $T: \cB \to \cA$ be a unital, normal, and 2-positive, linear map and $|\eta_\cA\rangle \in \sH$ a faithful state. Then
the rotated Petz map $\alpha_{\eta,T}^t: \cA \to \cB$ is defined implicitly by the identity:
\ben
\label{eq:Petzdef}
\langle b \eta_{\cB} | J_\cB \Delta_{\eta_\cB}^{it} \alpha_{\eta,T}^t(a) \eta_{\cB} \rangle = 
\langle T(b) \eta_\cA | J_\cA \Delta_{\eta_\cA}^{it} a \eta_{\cA} \rangle,
\een
for all $a \in \cA, b \in \cB$.
\end{definition}

Closely related to the Petz map is the linear map $V_{\psi}: \sK \to \sH$  defined\footnote{
As it stands, the definition is actually consistent only when $|\xi_{\psi}^{\cB}\rangle$ is cyclic and separating. 
In the general case, one can define \cite{Petz4} instead
\ben
\label{eq:Vdef1}
V_{\psi}(b |\xi_{\psi}^{\cB}\rangle + |\zeta\rangle) := T(b) |\xi_{\psi}^{\cB} \rangle \quad (b \in \cB,
\pi^{\cB'}(\psi) |\zeta\rangle =0).
\een
} 
$\omega_\psi$ by \cite{Petz4,Petz2}
\ben
\label{eq:Vdef}
V_{\psi}b|\xi_{\psi}^{\cB}\rangle := T(b) |\xi_{\psi}^{\cA} \rangle \quad (b \in \cB).
\een
It follows from Kadison's property $T(a^*a) \ge T(a^*)T(a)$ (which is a consequence of \eqref{eq:eqschwarz}) that $V_{\psi}$ is a contraction $\|V_{\psi}\| \le 1$, see e.g. \cite{Petz4}, proof of thm. 4. 

As in paper II, we introduce a vector valued function 
\ben\label{eq:Gupper}
z \mapsto |\Gamma_\psi(z)\rangle := 
\Delta_{\eta_\cA,\psi_\cA}^{z} V_{\psi} \Delta_{\eta_\cB,\psi_\cB}^{-z} |\xi_{\psi}^{\cB}\rangle \quad (z\in \overline \bS_{1/2}), 
\een
the existence and properties of which are established in lem.s 3, 4 in paper II. In particular, $|\Gamma_\psi(z)\rangle$ is holomorphic inside 
the strip $\bS_{1/2}$ and bounded in the closure $\overline \bS_{1/2}$ in norm by 1. Furthermore, the representation (24) of paper I
shows in conjunction with Stone's theorem that this function is strongly continuous on the boundaries of the strip
$\bS_{1/2}$, i.e. for $\Re(z)=0$ or $\Re(z)=1/2$, which is used implicitly below e.g. when we consider integrals involving this quantity 
along these boundaries. The relation to the Petz map is 
as follows, paper II, lem. 2:
\ben\label{eq:pb}
\langle \Gamma_\psi(1/2+it) | a \, \Gamma_\psi(1/2+it) \rangle \le \omega_\psi \circ T \circ \alpha^t_{\eta,T}(a) \quad t \in \RR, a \in \cA_+.
\een
 
\subsection{Improved DPI}

Our main theorem is:

\begin{theorem}\label{thm:main}
Let $T:\cB \to \cA$ be a two-positive, unital (in the sense \eqref{eq:eqschwarz}) linear map between two von Neumann algebras, and let $\omega_\psi, \omega_\eta$ be normal states on $\cA$, with $\omega_\eta$ faithful. Then 
\ben
\label{mainth}
S(\omega_\psi | \omega_\eta) - S(\omega_\psi \circ T| \omega_\eta \circ T) \ge S_{\rm meas}(\omega_\psi | \omega_\psi \circ T \circ \alpha_{T,\eta}).
\een
with the recovery channel
\ben
\alpha_{T,\eta} \equiv \int_\RR dt \, \beta_0(t) \, \alpha^t_{T,\eta}.
\een
\end{theorem}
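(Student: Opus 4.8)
The plan is to reduce the assertion, by means of the variational formula for the measured relative entropy, to a one-parameter family of estimates indexed by a single bounded perturbation, and then to read off each such estimate from the analytic vector family $|\Gamma_\psi(z)\rangle$ of \eqref{eq:Gupper} by combining the Hirschman three-line bound of lem.~\ref{lem:4} with the pull-back inequality \eqref{eq:pb}. Write $\sigma := \omega_\psi \circ T \circ \alpha_{T,\eta}$ for the recovered state, abbreviate $L := S(\omega_\psi | \omega_\eta) - S(\omega_\psi\circ T | \omega_\eta\circ T)$, and let $|\xi_\sigma\rangle \in \sP^\natural_\cA$ be the natural-cone vector of $\sigma$. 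Because $\|e^{h/2}\xi_\sigma\|^2 = \langle \xi_\sigma | e^h \xi_\sigma\rangle = \sigma(e^h)$, the definition \eqref{SMdef} reads $S_{\rm meas}(\omega_\psi|\sigma) = \sup_{h\in\cA_{\rm s.a.}}\{\omega_\psi(h) - \log\sigma(e^h)\}$, so it is enough to establish, for each fixed $h\in\cA_{\rm s.a.}$, that $L \ge \omega_\psi(h) - \log\sigma(e^h)$. We may assume $S(\omega_\psi|\omega_\eta)<\infty$, since otherwise $L = +\infty$ and there is nothing to prove.

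Fix such an $h$ and introduce the $\sH$-valued family $|G(z)\rangle := e^{zh}|\Gamma_\psi(z)\rangle$, holomorphic on $\bS_{1/2}$ and bounded on its closure (using $\|e^{zh}\|\le e^{\|h\|/2}$ there and $\||\Gamma_\psi(z)\rangle\|\le 1$). On the lower edge $\|G(it)\| = \|\Gamma_\psi(it)\| \le 1$ since $e^{ith}$ is unitary. On the upper edge the key observation is that $e^{(1/2-it)h}e^{(1/2+it)h} = e^h$ by functional calculus, so $\|G(1/2+it)\|^2 = \langle \Gamma_\psi(1/2+it) | e^h\, \Gamma_\psi(1/2+it)\rangle$, which by \eqref{eq:pb} applied to $e^h \in \cA_+$ is at most $\omega_\psi\circ T\circ\alpha^t_{\eta,T}(e^h)$. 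At $z=0$ one has $|G(0)\rangle = V_\psi|\xi^\cB_\psi\rangle = |\xi^\cA_\psi\rangle$ by unitality of $T$, so $\|G(0)\|^2 = 1$. The conceptual core of the argument is the boundary derivative identity
\[
\tfrac{\dd}{\dd\theta}\log\|G(\theta)\|^2\big|_{\theta=0^+} = 2\,\omega_\psi(h) - 2L,
\]
which follows by differentiating $|G(\theta)\rangle = e^{\theta h}\Delta_{\eta_\cA,\psi_\cA}^{\theta} V_\psi \Delta_{\eta_\cB,\psi_\cB}^{-\theta}|\xi^\cB_\psi\rangle$ at $\theta=0$: the factor $e^{\theta h}$ yields $2\,\omega_\psi(h)$, while the two modular factors yield $2\langle\xi^\cA_\psi|\log\Delta_{\eta_\cA,\psi_\cA}\,\xi^\cA_\psi\rangle - 2\langle\xi^\cB_\psi|\log\Delta_{\eta_\cB,\psi_\cB}\,\xi^\cB_\psi\rangle = -2L$ by Araki's formula \eqref{eq:Sdef} for the relative entropy, together with the contraction identity $V_\psi^*|\xi^\cA_\psi\rangle = |\xi^\cB_\psi\rangle$.

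These ingredients assemble quickly. The map $\theta\mapsto\log\|G(\theta)\|^2$ is convex on $[0,1/2)$ with value $0$ at $\theta=0$, so its right derivative is bounded above by every secant slope, giving $\omega_\psi(h) - L \le (2\theta)^{-1}\log\|G(\theta)\|^2$ for all $\theta\in(0,1/2)$. Applying lem.~\ref{lem:4} to $g(z) := \langle w | G(z)\rangle$, with $|w\rangle$ a unit vector realizing $\|G(\theta)\| = |g(\theta)|$, and discarding the non-positive lower-edge term (recall $\|G(it)\|\le 1$), gives $(2\theta)^{-1}\log\|G(\theta)\|^2 \le \int_\RR \dd t\,\beta_\theta(t)\log\|G(1/2+it)\|^2$. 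Chaining these two bounds, then using Jensen's inequality for the probability density $\beta_\theta$ (concavity of $\log$) followed by the upper-edge estimate from \eqref{eq:pb}, yields for every $\theta\in(0,1/2)$
\[
L \ge \omega_\psi(h) - \log\int_\RR \dd t\,\beta_\theta(t)\,\omega_\psi\circ T\circ\alpha^t_{\eta,T}(e^h).
\]
Letting $\theta\to 0^+$ — dominated convergence applies since the integrand is bounded by $e^{\|h\|}$ and $\beta_\theta\to\beta_0$ pointwise — and recognizing $\int_\RR\dd t\,\beta_0(t)\,\omega_\psi\circ T\circ\alpha^t_{\eta,T}(e^h) = \sigma(e^h)$ from the definition of $\alpha_{T,\eta}$, we obtain $L\ge\omega_\psi(h) - \log\sigma(e^h)$; taking the supremum over $h\in\cA_{\rm s.a.}$ proves the theorem.

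The step I expect to be the main obstacle is the boundary derivative identity. It requires differentiability of $\theta\mapsto|G(\theta)\rangle$ at the edge point $\theta=0$, i.e. $|\xi^\cB_\psi\rangle\in\dom\log\Delta_{\eta_\cB,\psi_\cB}$ and $|\xi^\cA_\psi\rangle\in\dom\log\Delta_{\eta_\cA,\psi_\cA}$, which is precisely where the hypothesis $S(\omega_\psi|\omega_\eta)<\infty$ enters, and it requires the identification $V_\psi^*|\xi^\cA_\psi\rangle = |\xi^\cB_\psi\rangle$ up to the projection onto $\overline{\cB|\xi^\cB_\psi\rangle}$; both facts should be extractable from the construction of $|\Gamma_\psi(z)\rangle$ and the properties of $V_\psi$ in paper~II. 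A secondary nuisance is interchanging the limit $\theta\to 0^+$ with the boundary integral while controlling the possibly large negative values of $\log\|G(1/2+it)\|^2$; this is exactly why I apply Jensen's inequality before passing to the limit, thereby replacing the delicate logarithmic integrand by the uniformly bounded quantity $\omega_\psi\circ T\circ\alpha^t_{\eta,T}(e^h)$.
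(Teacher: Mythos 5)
Your reduction of \eqref{mainth} to the family of estimates $L \ge \omega_\psi(h) - \log\sigma(e^h)$, $h\in\cA_{\rm s.a.}$, via the variational formula \eqref{SMdef}, and your upper-edge chain (Hirschman applied to $\langle w|G(z)\rangle$, discarding the lower edge since $\|G(it)\|\le 1$, then Jensen, then \eqref{eq:pb}, then dominated convergence as $\theta\to 0^+$) are both sound and coincide with what the paper does in part I of its proof (there with $\theta=1/n$). The formal boundary-derivative identity is also correct: in finite dimensions one checks directly that $\tfrac{d}{d\theta}\|G(\theta)\|^2|_{\theta=0}=2\omega_\psi(h)-2L$. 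The genuine gap is your justification of that identity in the von Neumann setting, and this justification is not a technicality --- it is the entire difficulty of the theorem. Finiteness of $S(\omega_\psi|\omega_\eta)$ does \emph{not} imply $|\xi^\cA_\psi\rangle\in\dom(\log\Delta_{\eta_\cA,\psi_\cA})$: by \eqref{eq:Sdef}, $S<\infty$ says the spectral measure $d\mu(\lambda)=d\|E_\lambda\xi_\psi\|^2$ of $\Delta_{\eta,\psi}$ satisfies the first-moment condition $\int_0^1|\log\lambda|\,d\mu<\infty$, whereas domain membership is the strictly stronger second-moment condition $\int(\log\lambda)^2\,d\mu<\infty$. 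So the strong derivative of $e^{\theta h}\Delta^{\theta}_{\eta_\cA,\psi_\cA}V_\psi\Delta^{-\theta}_{\eta_\cB,\psi_\cB}|\xi^\cB_\psi\rangle$ at $\theta=0^+$ need not exist, and the scalar shortcut that works for a single operator ($\|\Delta^\theta\xi\|^2=\int\lambda^{2\theta}d\mu$, differentiable at $0^+$ by monotone convergence whenever $S<\infty$) does not survive the three noncommuting factors, because $\Delta^\theta_{\eta_\cA,\psi_\cA}$ acts on a $\theta$-dependent vector. Your auxiliary convexity claim is likewise unsupported: subharmonicity of $\log\|G(z)\|$ makes $\sup_t\log\|G(\theta+it)\|$ convex in $\theta$ (three-lines), not the restriction to the real axis --- e.g.\ $g(z)=2-e^{z^2}$ is holomorphic and bounded on $\bS_{1/2}$ with $\log|g(\theta)|$ strictly concave at $\theta=0$ --- though this is secondary, since convexity could be dropped if the derivative existed.

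This is exactly why the paper never differentiates modular operators. It first imposes the majorization \eqref{relbound}, under which $L$ is encoded algebraically in the \emph{bounded} cocycle derivative $k$ of \eqref{kdef} together with the variational formula \eqref{eq:var2}; it keeps the Araki--Masuda $L_n$-norm, not the Hilbert norm, on the left of the interpolation inequality, identifies $\|e^{h/n}\Gamma_\psi(1/n)\|^n_{\psi,n}$ through the $\sL_p$-calculus of \cite{AM} (lem.~5), and computes its $n\to\infty$ limit as $\|\eta^{h+k}\|^2$ by Araki's Lie--Trotter formula (lem.~6); only then does the variational formula close the argument. Finally it removes \eqref{relbound} in part II by the Gaussian regularization with the upper-semicontinuity property (P3) and lower semicontinuity of $S_{\rm meas}$ --- a step that cannot be skipped, since the relative entropy enters \eqref{mainth} with both signs and is only lower semicontinuous under generic approximation. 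Your proposal has no counterpart to either the majorization step or the regularization step, and its substitute for the $L_n$/Trotter machinery --- one-sided differentiability at the edge under the sole hypothesis $S<\infty$ --- is false as stated. To repair it you would at minimum need to prove your derivative identity under \eqref{relbound} (where $\log\Delta_{\eta,\psi}$ differs from $\log\Delta_{\psi}$ by a bounded perturbation, so $|\xi_\psi\rangle$ does lie in all relevant domains) and then reproduce part II of the paper's proof to reach general $\omega_\psi$.
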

\noindent
{\bf Remarks:}
1) The theorem should generalize to non-faithful $\omega_\eta$ by applying appropriate support projections in a similar way as in paper I, lem. 1.

2) For finite-dimensional type I von Neumann algebras i.e. matrices, our result is due to \cite{Sutter2}. The recovery channel is given 
explicitly by \eqref{recov} in this case as an operator on density matrices, 
where $\sigma_\cA, \sigma_\cB$ are the density matrices corresponding to $\omega_\eta, \omega_\eta \circ T$.

3) By prop. \ref{FS}, our bound implies that given in our previous paper II for the fidelity; in fact it is stronger in many cases.

\medskip

{\bf I) Proof under a majorization condition:} First we consider the special case where there exists $\infty>c\ge 1$ such that 
\ben\label{relbound}
c^{-1} \omega_\eta \le \omega_\psi \le c\omega_\eta.
\een
Note that this implies $c^{-1} \omega_\eta \circ T \le \omega_\psi \circ T \le c\omega_\eta \circ T$ as $T$ is positive. 
By \cite{Petz1993}, thm. 12.11 (due to Araki), there exists a $h = h^* \in \cA$ such that $|\psi\rangle = |\eta^h\rangle/\|\eta^h\|$ such 
that $\|h\| \le \log c$, and vice versa. As is well known, this furthermore implies that the Connes cocycle $[D\eta^\cB:D\psi^\cB]_{iz}$
is holomorphic in the two-sided strip $\{z \in \CC : |\Re(z)|<1/2\}$ and bounded in norm (by $c^{\Re(z)}$) on the closure of this strip, see e.g. 
paper II, lem. 5. As a consequence, 
we have an absolutely convergent (in the operator norm) power series expansion 
\ben\label{power}
[D\eta^\cB:D\psi^\cB]_{iz} = 1+ \sum_{l=1}^\infty z^l k_l, 
\een
with bounded operators $k_l \in \cB$ such that $\|k_l\| \le C^l$. We set  
\ben\label{kdef}
k := \frac{d}{idt} T([D\eta^\cB:D\psi^\cB]_{t}) |_{t=0} \in \cA_{\rm s.a.} .
\een 
Using \cite{Petz1993}, cor. 12.8, and the definition of the relative entropy in terms of the Connes cocycle,
\ben
\begin{split}
S_\cA(\psi | \eta^k) =& S_\cA(\psi | \eta) - \omega_\psi(k) \\
=&  S_\cA(\psi | \eta) - \langle \psi^\cA | \frac{d}{idt} T([D\eta^\cB:D\psi^\cB]_{t}) \psi^\cA \rangle |_{t=0}\\
=& S_\cA(\psi | \eta) - S_\cB(\psi | \eta),
\end{split}
\een
which is one side of the inequality that we would like to prove. The variational expression \eqref{eq:var} then gives:
\ben
\label{eq:var2}
S_\cA(\psi | \eta) - S_\cB(\psi | \eta) = \sup_{h \in \cA_{\rm s.a.}} \{ \omega_\psi(h) - \log \| \eta^{h+k} \|^2 \}, 
\een
where we used $|(\eta^k)^h \rangle= |\eta^{k+h}\rangle$ see \cite{Petz1993}, thm. 12.10. To get the desired DPI we will establish an upper 
bound on $\log \| \eta^{h+k} \|^2$. 

In lem. \ref{lem:hirsch}, we take $|G(z)\rangle = e^{zh} |\Gamma_\psi(z)\rangle$, $p_0=\infty, p_1=2$ where $\theta=1/n$ with $n \in 4\mathbb N$
and $h=h^* \in \cA$. At the lower boundary we have with $u_\cB(t) := [D\eta^\cB:D\psi^\cB]_{t} \in \cB, u_\cA(t) := [D\eta^\cA:D\psi^\cA]_{t} \in \cA$ 
the unitary Connes cocycles,
\ben
\begin{split}
\|G(it)\|_{p_0,\psi} 
=& \| e^{ith} \Delta_{\eta_\cA,\psi_\cA}^{it} V_{\psi} \Delta_{\eta_\cB,\psi_\cB}^{-it} \xi_{\psi}^{\cB} \|_{\infty,\psi} \\
=& \| e^{ith} \Delta_{\eta_\cA,\psi_\cA}^{it} T(u_\cB(t)) \psi \|_{\infty,\psi}\\
=& \| e^{ith} \varsigma^t_\eta[T(u_\cB(t))] u_\cA(t)^*  \psi \|_{\infty,\psi}\\
=& \| e^{ith} \varsigma^t_\eta[T(u_\cB(t))] u_\cA(t)^*\|\\
= &  \| \varsigma^t_\eta[T(u_\cB(t))]\| \le 1,
\end{split}
\een
where we used 
$ \| \varsigma^t_\eta [T(b)] \| = \| T(b) \| \le \| b \| $
(from the positivity of $T$ and $\varsigma^t_\eta = {\rm Ad} \Delta^{it}_{\eta_\cA}$) as well as 
 the isomeric identification of $L^\infty(\cA, \psi) \owns a|\psi\rangle \mapsto a \in \cA$
proven in \cite{AM}.
Since $p_\theta = n$ and $\log \|G(it)\|_{p_0,\psi} \le 0$ as just shown, we get from lem. \ref{lem:hirsch}
\ben
\begin{split}
\log \| e^{h/n} \Gamma_\psi(1/n) \|_{\psi,n}^n \le& \int_\RR dt \, \beta_{1/n}(t) \, \log \| G(1/2+it)\|_{p_1,\psi} \\
=& \int_\RR dt \, \beta_{1/n}(t) \, \log \| e^{h/2} \Gamma_\psi(1/2+it)\|^2 \\
\le& \log \int_\RR dt \, \beta_{1/n}(t) \, \| e^{h/2} \Gamma_\psi(1/2+it)\|^2 \\
\le& \log \int_\RR dt \, \beta_{1/n}(t) \,  \omega_\psi \circ T \circ \alpha^t_{\eta,T}(e^h),
\end{split}
\een
using \eqref{eq:pb} in the third line and Jensen's inequality in the second (noting that the integrand is continuous and uniformly bounded).
Taking the lim-sup $n \to \infty$, we get using the definition of the recovery channel $\alpha_{T,\eta}$:
\ben\label{eq:var1}
\limsup_n \log \| e^{h/n} \Gamma_\psi(1/n) \|_{\psi,n}^n \le \omega_\psi \circ T \circ \alpha_{\eta,T}(e^h).
\een
The next lemmas give an expression for the lim-sup:
\begin{lemma}
We have $\| e^{h/n} \Gamma_{\psi}(1/n) \|_{\psi,n}^n = \|(e^{h/n} \Delta_{\eta,\psi}^{1/n} a_n \Delta_{\eta,\psi}^{1/n} e^{h/n} )^{n/4}_{}  \psi \|^2$, 
where 
\ben
a_n = T([D\eta^\cB:D\psi^\cB]_{i/n})^*T([D\eta^\cB:D\psi^\cB]_{i/n})\in \cA_+.
\een
\end{lemma}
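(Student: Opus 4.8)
The plan is to follow closely the template of the Golden--Thompson corollary (cor.~\ref{cor:3}, Case I), the only new features being the relative modular operator $\Delta_{\eta,\psi}$ in place of $\Delta_\psi$ and the contraction $V_\psi$ hidden inside $\Gamma_\psi$. First I would rewrite the vector $e^{h/n}\Gamma_\psi(1/n)$ in a form to which the Araki--Masuda machinery applies. Starting from the definition \eqref{eq:Gupper}, the key point is that on $|\xi_\psi^\cB\rangle=|\psi_\cB\rangle$ the operator $\Delta_{\eta_\cB,\psi_\cB}^{-1/n}$ acts as a Connes cocycle: since $\Delta_{\eta_\cB,\psi_\cB}^{it}|\psi_\cB\rangle=[D\eta^\cB:D\psi^\cB]_t\,|\psi_\cB\rangle$ for real $t$, analytic continuation to $t=i/n$ gives $\Delta_{\eta_\cB,\psi_\cB}^{-1/n}|\xi_\psi^\cB\rangle=[D\eta^\cB:D\psi^\cB]_{i/n}|\xi_\psi^\cB\rangle=:b_n|\xi_\psi^\cB\rangle$. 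This continuation is exactly where the majorization hypothesis \eqref{relbound} enters: by the power--series bound \eqref{power} the cocycle $[D\eta^\cB:D\psi^\cB]_{iz}$ is bounded and holomorphic on $|\Re z|<1/2$, and $z=1/n$ lies strictly inside this strip because $n\ge 4$, so $b_n\in\cB$ is a genuine bounded element. Feeding $b_n|\xi_\psi^\cB\rangle$ through the definition \eqref{eq:Vdef} of $V_\psi$ yields $V_\psi\Delta_{\eta_\cB,\psi_\cB}^{-1/n}|\xi_\psi^\cB\rangle=T(b_n)|\xi_\psi^\cA\rangle$, whence $e^{h/n}\Gamma_\psi(1/n)=e^{h/n}\Delta_{\eta,\psi}^{1/n}T(b_n)|\psi\rangle$.

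Next I would compute the $L_n$-norm of this vector exactly as in cor.~\ref{cor:3}. By the generalized polar decomposition of \cite{AM}, thm.~3~(4), there is $|\phi_n\rangle\in L_n(\sH,\psi)\cap\sP_\cM^{1/(2n)}$ together with a partial isometry in $\cA$ such that $e^{h/n}\Gamma_\psi(1/n)$ has positive part $\Delta_{\phi_n,\psi}^{1/n}|\psi\rangle$ and $\|\phi_n\|^2=\|e^{h/n}\Gamma_\psi(1/n)\|_{\psi,n}^n$. It then remains to unfold $|\phi_n\rangle=J\Delta_{\phi_n,\psi}^{1/2}|\psi\rangle$. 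Applying \cite{AM}, lem.~7.7~(2) repeatedly --- precisely the step that produced \eqref{eq:pr3} in the Golden--Thompson proof --- converts the half-power $\Delta_{\phi_n,\psi}^{1/2}$ into the $n/4$-th power of the symmetrized product, giving $\Delta_{\phi_n,\psi}^{1/2}|\psi\rangle=(e^{h/n}\Delta_{\eta,\psi}^{1/n}a_n\Delta_{\eta,\psi}^{1/n}e^{h/n})^{n/4}|\psi\rangle$ with $a_n=T(b_n)^*T(b_n)\in\cA_+$; since $J$ is a surjective isometry, the claimed norm identity follows.

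The step I expect to be the real work is this iterated application of \cite{AM}, lem.~7.7~(2). A naive manipulation of $\Delta_{\eta,\psi}^{1/n}$ and $T(b_n)$ as operators does not by itself produce the symmetric, self-adjoint combination $a_n=T(b_n)^*T(b_n)$; it is the passage to the positive part in the generalized polar decomposition, together with the chain rule for relative modular operators encoded in lem.~7.7~(2), that simultaneously symmetrizes the expression and supplies the adjoint. One has to check that the boundedness and analyticity of $b_n$ guaranteed by \eqref{relbound} keep every intermediate vector in the appropriate domain, so that the iteration is justified and terminates after $n/4$ (an integer, since $n\in4\mathbb N$) steps. The remaining ingredients --- the strong continuity of $\Gamma_\psi$ on the strip boundary and the identification $|\xi_\psi^\cB\rangle=|\psi_\cB\rangle$ --- are routine and already recorded in paper~II and in the proof of cor.~\ref{cor:3}.
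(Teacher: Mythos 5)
Your first two steps coincide with the paper's own proof: the identity $e^{h/n}\Gamma_\psi(1/n)=e^{h/n}\Delta_{\eta,\psi}^{1/n}T(b_n)|\psi\rangle$ with $b_n=[D\eta^\cB:D\psi^\cB]_{i/n}$, obtained from the analytically continued cocycle and \eqref{eq:Vdef}, and the Araki--Masuda polar decomposition in $L_n(\cA,\psi)$, are exactly how the paper proceeds. The gap is in your third step, which you yourself flag as ``the real work''. Write $X=e^{h/n}\Delta_{\eta,\psi}^{1/n}T(b_n)$, so that $X|\psi\rangle=u\Delta_{\phi_n,\psi}^{1/n}|\psi\rangle$. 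Your plan --- to discard the partial isometry and express $\Delta_{\phi_n,\psi}^{1/2}|\psi\rangle$ itself as a power of the symmetrized product --- forces the sandwich $X^*X$: multiplying the polar decomposition by $u^*$ and using the support property $u^*u\Delta_{\phi_n,\psi}^{1/n}=\Delta_{\phi_n,\psi}^{1/n}$, the identifications of \cite{AM}, lem.~7.7~(2) give $\Delta_{\phi_n,\psi}^{1/n}=u^*X$, hence $\Delta_{\phi_n,\psi}^{2/n}=X^*uu^*X=X^*X$ and
\ben
\Delta_{\phi_n,\psi}^{1/2}|\psi\rangle=\left( T(b_n)^*\,\Delta_{\eta,\psi}^{1/n}\,e^{2h/n}\,\Delta_{\eta,\psi}^{1/n}\,T(b_n)\right)^{n/4}|\psi\rangle .
\een
Here $T(b_n)^*$ stands on the outside and $e^{2h/n}$ in the middle: this is \emph{not} the arrangement claimed in the lemma, and since a general von Neumann algebra carries no trace you cannot cyclically permute the factors to convert one sandwich into the other. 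So your route, executed correctly, proves a true norm identity, but not the one stated.

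To obtain a form with $e^{h/n}$ on the outside you must keep the isometry and build the opposite sandwich, as the paper does: $u\Delta_{\phi_n,\psi}^{2/n}u^*=XX^*=e^{h/n}\Delta_{\eta,\psi}^{1/n}T(b_n)T(b_n)^*\Delta_{\eta,\psi}^{1/n}e^{h/n}$, iterate \cite{AM}, lem.~7.7~(2) to get $u\Delta_{\phi_n,\psi}^{1/2}u^*=(XX^*)^{n/4}$, and then evaluate the norm by modular theory, $\|u\Delta_{\phi_n,\psi}^{1/2}u^*\psi\|=\|uJu\,\phi_n\|=\|\phi_n\|$, rather than by your appeal to $J$ alone. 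Note that even this route yields $T(b_n)T(b_n)^*$ in the middle, whereas the lemma as printed has $T(b_n)^*T(b_n)$; that discrepancy is internal to the paper (whose proof in fact also interchanges the roles of $e^{h/n}$ and $T(b_n)$ at one point) and is harmless downstream, because the subsequent Trotter-limit lemma only uses the expansion $a_n=1+2n^{-1}k+O(n^{-2})$, which holds for either ordering. In summary, your strategy and ingredients are the right ones, but the mechanism you describe for ``symmetrizing and supplying the adjoint'' produces the wrong factor ordering, and in the absence of a trace that ordering cannot be repaired after the fact.
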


\begin{lemma}
We have $\lim_n \|(e^{h/n} \Delta_{\eta,\psi}^{1/n} a_n \Delta_{\eta,\psi}^{1/n} e^{h/n} )^{n/4}_{}  \psi \|^2 = \|\eta^{h+k}\|^2$.
\end{lemma}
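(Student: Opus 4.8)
The plan is to promote the identity of the preceding lemma to a statement of strong convergence. Writing $X_n := e^{h/n}\Delta_{\eta,\psi}^{1/n}a_n\Delta_{\eta,\psi}^{1/n}e^{h/n}$, I will show that $X_n^{n/4}|\psi\rangle \to |\eta^{h+k}\rangle$ strongly; since the Hilbert-space norm is continuous, this yields at once $\lim_n \|X_n^{n/4}\psi\|^2 = \|\eta^{h+k}\|^2$. The engine is the same generalized Lie-Trotter product formula of Araki \cite{Araki4} already used in cor.\ \ref{cor:3}, but now carried out with the relative modular operator $\Delta_{\eta,\psi}$ in place of $\Delta_\psi$.

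First I would control the middle factor $a_n$. Under the majorization hypothesis \eqref{relbound} the Connes cocycle $[D\eta^\cB:D\psi^\cB]_{iz}$ is holomorphic and bounded on a two-sided strip and admits the norm-convergent expansion \eqref{power}. Evaluating at $z=1/n$ and applying the unital, contractive, $*$-preserving map $T$ gives $T([D\eta^\cB:D\psi^\cB]_{i/n}) = 1 + \tfrac1n T(k_1) + O(n^{-2})$ in operator norm, with a uniform $O(n^{-2})$ remainder. Forming $a_n$ and using the definition \eqref{kdef} of $k$ to identify the self-adjoint generator $T(k_1)+T(k_1)^*$ with $2k$, I obtain $a_n = 1 + \tfrac{2}{n}k + O(n^{-2})$, again with operator-norm error dominated by $C n^{-2}$; this is the precise sense in which $a_n$ is an exponential perturbation $e^{2k/n}$ to the relevant order.

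Next I would feed $X_n = e^{h/n}\Delta_{\eta,\psi}^{1/n}e^{2k/n}\Delta_{\eta,\psi}^{1/n}e^{h/n}(1+O(n^{-2}))$ into the Lie-Trotter machinery. Counting generators, the two factors $\Delta_{\eta,\psi}^{1/n}$ contribute $\tfrac2n\log\Delta_{\eta,\psi}$, the two $e^{h/n}$ contribute $\tfrac2n h$, and the middle factor contributes $\tfrac2n k$, so that $\tfrac{n}{4}\log X_n \to \tfrac12(\log\Delta_{\eta,\psi}+h+k)$ and one expects $s\text{-}\lim_n X_n^{n/4}|\psi\rangle = e^{(\log\Delta_{\eta,\psi}+h+k)/2}|\psi\rangle$. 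Making this rigorous requires Araki's formula for a positive self-adjoint operator (here $\Delta_{\eta,\psi}$, with $|\psi\rangle$ in its form domain since $\Delta_{\eta,\psi}^{1/2}|\psi\rangle=|\xi_\eta\rangle$ exists) together with the uniform $O(n^{-2})$ estimate on $a_n$, which guarantees that replacing $a_n$ by $e^{2k/n}$ does not affect the limit; the argument is the strict analogue of the passage \eqref{eq:pr4}--\eqref{LT} in cor.\ \ref{cor:3}.

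It remains to identify the limit vector $e^{(\log\Delta_{\eta,\psi}+h+k)/2}|\psi\rangle$ with $|\eta^{h+k}\rangle$. At $b:=h+k=0$ this is just $\Delta_{\eta,\psi}^{1/2}|\psi\rangle = |\xi_\eta\rangle$, the natural-cone vector of $\omega_\eta$. For general $b\in\cM_{\rm s.a.}$ I would show that $e^{(\log\Delta_{\eta,\psi}+b)/2}|\psi\rangle$ is the natural-cone representative of the perturbed state $\omega_\eta^b$, either by matching it term-by-term against the convergent perturbation series \eqref{psih} for $|\eta^b\rangle$ or by verifying that it satisfies the defining variational characterization of the perturbed vector; the relative modular operator is the right object here precisely because it encodes $\omega_\eta$ through its left action while the base point $|\psi\rangle$ supplies the commutant data. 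I expect this identification, together with the relative version of the Lie-Trotter convergence, to be the main obstacle: in a general von Neumann algebra $\log\Delta_{\eta,\psi}$ does not split into separate bounded contributions from $\eta$ and $\psi$, so the whole argument must be run at the level of the relative modular operator and its form domain, rather than reduced to commuting left/right pieces as in the matrix case.
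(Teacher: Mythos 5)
Your proposal is correct and follows essentially the same route as the paper: the norm expansion $a_n = 1 + 2n^{-1}k + O(n^{-2})$ from the cocycle power series, a relative-modular-operator version of Araki's Lie--Trotter formula giving the strong convergence $X_n^{n/4}|\psi\rangle \to e^{(\log\Delta_{\eta,\psi}+h+k)/2}|\psi\rangle$, and then identification of this limit vector with $|\eta^{h+k}\rangle$. The two steps you flag as the main obstacles are exactly the ones the paper executes: the Lie--Trotter step is done by an explicit binomial expansion whose Riemann sums converge to the perturbation series (following \cite{Araki4}, proof of lem.~5, and \cite{Araki5}), and the identification is done not by term-by-term matching against \eqref{psih} but via Petz's identity $\log\Delta_{\eta,\psi} + p'h + p'k = \log\Delta_{\eta^{h+k},\psi}$ (\cite{Petz1993}, thm.~12.6) combined with $J$-invariance of natural-cone vectors, which is the precise tool your sketch is gesturing at.
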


Combining the two lemmas with eq.s \eqref{eq:var2}, \eqref{eq:var1} gives
\ben
\label{eq:var}
S_\cA(\psi | \eta) - S_\cB(\psi | \eta) \ge 
\sup_{h \in \cA_{\rm s.a.}} \{ \omega_\psi(h) - \log \omega_\psi \circ T \circ \alpha_{\eta,T}(e^h) \}
=
S_{\rm meas}(\omega_\psi | \omega_\psi \circ T \circ \alpha_{T,\eta}),
\een
using the variational definition \eqref{SMdef} of $S_{\rm meas}$ in the last step.

\medskip
\noindent
{\it Proof of lem. 6:} Since \eqref{power} is an absolutely convergent power series in the operator norm, 
it follows that 
$
a_n = 1+2n^{-1}k + O(n^{-2})
$
where $O(n^\alpha)$ denotes a family of operators such that 
$\|O(n^\alpha)\| \le cn^\alpha$ for all $n>0$. Since $h$ is bounded, we also have $e^{h/n} = 1+n^{-1}h + O(n^{-2})$.  
Replacing $n \to 2n$ to simplify some expressions we trivially get
\ben
e^{h/(2n)} \Delta_{\eta,\psi}^{1/(2n)} a_{2n} \Delta_{\eta,\psi}^{1/(2n)} e^{h/(2n)}
=  \Delta_{\eta,\psi}^{1/n} + n^{-1}X_n + n^{-2} Y_n
\een
where $X_n,Y_n$ is a finite sum of terms of the form $x_{0} \Delta_{\eta,\psi}^{s_1} x_1 \cdots x_l \Delta_{\eta,\psi}^{s_l} x_{l}$ wherein 
$\sum s_j= 1/n, s_j\ge 0$ and $\|x_j\| \le c$ uniformly in $n$. $X_n$ is given explicitly by
\ben
X_n= \tfrac{1}{2} h \Delta_{\eta,\psi}^{1/n} + \tfrac{1}{2} \Delta_{\eta,\psi}^{1/n} h 
+ \Delta_{\eta,\psi}^{1/(2n)}k\Delta_{\eta,\psi}^{1/(2n)}.
\een
By \cite{Araki2}, II, proof of thm. 3.1, the functions 
\ben
F(z):=x_1 \Delta_{\eta,\psi}^{z_1} x_2 \cdots x_j \Delta_{\eta,\psi}^{z_j} x_{j+1} |\psi\rangle, \quad z \in \bar \bS_{1/2}^j
\een
defined for given $x_j \in \cA$ are (strongly) analytic in the domain $\bS_{1/2}^j := \{(z_1, \dots, z_j) \in \CC^j :  0 <\Re(z_i), \sum \Re(z_i) < 1/2\}$
and strongly continuous on the closure.  Subharmonic analysis as in \cite{Araki2}, II, proof of thm. 3.1, or \cite{AM} furthermore gives the bound
\ben
\label{bound}
\|F(z)\| \le 
\prod_i \| x_i \| , \quad \forall z \in \bar \bS_{1/2}^j.
\een
This bound, and the elementary formula 
\ben
(A+tB)^N=\sum_{j=0}^N t^j
\sum_{{\tiny
\begin{matrix}
m_0+...+m_j=N-j, \\
m_j \in \bN_0
\end{matrix}
}} A^{m_0} B
\cdots A^{m_{j-1}} B A^{m_j} ,
\een
shows that the difference
\ben
\begin{split}
|\zeta_n\rangle =& \ (e^{h/(2n)} \Delta_{\eta,\psi}^{1/(2n)} a_{2n} \Delta_{\eta,\psi}^{1/(2n)} e^{h/(2n)})^{n/2} |\psi\rangle\\
&-\sum_{j=0}^{n/2} \, n^{-j} \sum_{{\tiny
\begin{matrix}
m_0+...+m_j=n/2-j, \\
m_j \in \bN_0
\end{matrix}
}} \Delta_{\eta,\psi}^{m_0/n} X_n 
\cdots \Delta_{\eta,\psi}^{m_{j-1}/n} X_n \Delta_{\eta,\psi}^{m_j/n} |\psi\rangle
\end{split}
\een
is bounded in norm by 
\ben
\| \zeta_n \| \le \left( 1+ n^{-1}(\|h\| + \|k\|) + n^{-2}c \right)^{n/2}-\left( 1+ n^{-1}(\|h\| + \|k\|)\right)^{n/2}
\een
for some $c<\infty$, hence it tends to zero in norm as $n \to \infty$. Setting now
\ben
|\phi_{n,j}\rangle = n^{-j} \sum_{{\tiny
\begin{matrix}
m_0+...+m_j=n/2-j, \\
m_j \in \bN_0
\end{matrix}
}} \Delta_{\eta,\psi}^{m_0/n} X_n 
\cdots \Delta_{\eta,\psi}^{m_{j-1}/n} X_n \Delta_{\eta,\psi}^{m_j/n} |\psi\rangle ,
\een
the strong continuity of the functions $F$ and the usual definition of the Riemann integral implies
\ben
\begin{split}
&|\phi_j\rangle := \lim_n |\phi_{n,j}\rangle \\
=& \int_0^{1/2} ds_0 \dots \int_0^{s_{j-1}} ds_j \, \Delta_{\eta,\psi}^{s_0-s_1}(h+k)\Delta^{s_{1}-s_2}_{\eta,\psi} (h+k) 
\dots \Delta^{s_{j-1}-s_j}_{\eta,\psi} (h+k) \Delta^{s_j}_{\eta,\psi} |\psi\rangle, 
\end{split}
\een
and the usual perturbation theory by bounded operators as in \cite{Araki5}, prop. 16 
gives $\sum_{j=0}^\infty |\phi_j\rangle = e^{(\log \Delta_{\eta,\psi} + h + k)/2}| \psi \rangle$. Hence, 
\ben
\lim_n
(e^{h/(2n)} \Delta_{\eta,\psi}^{1/(2n)} a_{2n} \Delta_{\eta,\psi}^{1/(2n)} e^{h/(2n)})^{n/2} |\psi\rangle = e^{(\log \Delta_{\eta,\psi} + h + k)/2}| \psi \rangle
\een
strongly, as argued more carefully in \cite{Araki4}, proof of lem. 5.
We have $e^{(\log \Delta_{\eta,\psi} + h + k)/2}| \psi \rangle= e^{(\log \Delta_{\eta,\psi} + p'h + p'k)/2}| \psi \rangle$
(here $p' = \pi^{\cA'}(\psi) \in \cA'$). 
Also, using \cite{Petz1993}, thm. 12.6., we have $\log \Delta_{\eta,\psi} + p'h + p'k = \log \Delta_{\eta^{h+k},\psi}$, and this 
gives $| \eta^{h+k}\rangle = J| \eta^{h+k}\rangle = e^{(\log \Delta_{\eta,\psi} + h + k)/2}| \psi \rangle$ by 
relative modular theory. This completes the proof. \qed

\medskip
\noindent
{\it Proof of lem. 5:} From the definitions, 
\ben
e^{h/n} \Gamma_\psi(1/n)  
= e^{h/n} \Delta_{\eta_\cA, \psi_\cA}^{1/n} V_\psi \Delta_{\eta_\cB, \psi_\cB}^{-1/n} \Delta_{\psi_\cB}^{1/n}  |\psi^\cB\rangle 
= e^{h/n} \Delta_{\eta_\cA, \psi_\cA}^{1/n}  T([D\eta^\cB:D\psi^\cB]_{i/n}) |\psi^\cA\rangle, 
\een
using the definition of the Connes-cocylce and the fact that $[D\eta^\cB:D\psi^\cB]_{i/n} \in \cB$ under our assumption \eqref{relbound}, 
see paper II, proof of lem. 4. In the following, let $a=e^{h/n}, b=T([D\eta^\cB:D\psi^\cB]_{i/n}) \in \cA$ and $|\psi^\cA\rangle=|\psi\rangle, 
|\eta^\cA\rangle=|\eta\rangle$ etc.

By the results of \cite{AM} (which hold in the present context since $\omega_\psi$ is faithful being dominated by the 
faithful state $\omega_\eta$), 
the vector $b \Delta_{\eta,\psi}^{1/n} a |\psi\rangle \in L_n(\cA,\psi)$ has a polar decomposition 
$b \Delta_{\eta,\psi}^{1/n} a |\psi\rangle = u\Delta_{\phi_n,\psi}^{1/n} |\psi\rangle$, where $\| b \Delta_{\eta,\psi}^{1/n} a \psi \|^n_{n,\psi} = \|\phi_n\|^2$
and where $u \in \cA$ is a partial isometry. To get an expression for $|\phi_n\rangle$, we use the formalism of ``script'' $\sL_p$-spaces of 
\cite{AM}, notation 7.6: As a vector space $\sL_p^*(\cA,\psi), p \ge 1$ consists of all formal linear combinations of formal expressions of the form
\ben
A=x_1 \Delta^{z_1}_{\zeta_1,\psi} x_2 \dots x_{n}  \Delta^{z_n}_{\zeta_n,\psi} x_{n+1}
\een
wherein $\Re(z_i) \ge 0, \sum_i \Re(z_i) \le 1-1/p$, $x_i \in \cA, \zeta_i \in \sH$, the formal adjoint of which is defined to be 
\ben
A^*=x_{n+1}^* \Delta^{\bar z_n}_{\zeta_n,\psi} x_n^* \dots x_{2}^*  \Delta^{\bar z_1}_{\zeta_1,\psi} x_{1}^*.
\een
The notation $\sL_{p,0}^*(\cA,\psi)$ is reserved for formal elements $A$ such that $\sum_i \Re(z_i) = 1-1/p$ in addition to all other 
conditions. It is then clear that $\sL_{p,0}^*(\cA,\psi) \sL_{q,0}^*(\cM,\psi) = \sL_{r,0}^*(\cM,\psi)$ as formal products where 
$1/r' = 1/p' + 1/q'$ with $1/p'=1-1/p$ as usual. By \cite{AM}, lem. 7.3, if $1 \le p \le 2$, any element $A \in \sL_{p}^*(\cA,\psi)$ can be viewed as an element of $L_{p'}(\cA,\psi)$ in the sense that $|\psi\rangle \in \sD(A)$ and $A|\psi\rangle \in L_{p'}(\cA,\psi)$.\footnote{
In fact, $\| A\psi\|_{p',\psi} \le \| x_{n+1} \| \prod_{i=1}^n (\| x_i \| \| \zeta_i \|^{\Re(z_i)})$.
} Furthermore, by \cite{AM}, lem. 7.7 (2), if $A_1, A_2 \in \sL_p^*(\cA,\psi)$
correspond to the same element under this identification, then so do $A_1^*, A^*_2$ or $A_1B, A_2B$ or $BA_1, BA_2$ if $B \in \sL_{q,0}^*(\cA,\psi)$ (as long as 
$1/p'+1/q' \le 1/2$, for example). 

We now start with the trivial statement that $u\Delta_{\phi_n,\psi}^{1/n} = b \Delta_{\eta,\psi}^{1/n} a$ in the sense that these elements of $\sL_{n',0}^*(\cA,\psi)$ are identified with the same element of $L_n(\cA,\psi)$. Then repeated application of \cite{AM}, lem. 7.7 (2) and the definition of adjoint gives 
\ben
u\Delta_{\phi_n,\psi}^{2/n} u^* = b \Delta_{\eta,\psi}^{1/n} aa^* \Delta_{\eta,\psi}^{1/n} b^* \quad \text{in $\sL_{n/(n-2),0}^*(\cA,\psi)$.}
\een
Forming successively $n/4$ products of this equality and applying in each step \cite{AM}, lem. 7.7 (2), we find that 
\ben
u\Delta_{\phi_n,\psi}^{1/2} u^* = (b \Delta_{\eta,\psi}^{1/n} aa^* \Delta_{\eta,\psi}^{1/n} b^*)_{}^{n/4} \quad \text{in $\sL_{2,0}^*(\cA,\psi)$,}
\een
meaning that both sides are equal as elements of $\sH=L_2(\cA,\psi)$ after we apply them to $|\psi\rangle$. Thus,  
\ben
\|(b \Delta_{\eta,\psi}^{1/n} aa^* \Delta_{\eta,\psi}^{1/n} b^*)_{}^{n/4} \psi\|^2 = \|u\Delta_{\phi_n,\psi}^{1/2} u^*\psi\|^2 = 
\|uJu\phi_n\|^2 = \|\phi_n\|^2
\een
using modular theory. Therefore 
\ben
\|(b \Delta_{\eta,\psi}^{1/n} aa^* \Delta_{\eta,\psi}^{1/n} b^*)_{}^{n/4} \psi\|^2 = \| b \Delta_{\eta,\psi}^{1/n} a \psi \|^n_{n,\psi},
\een
and the proof of the lemma
is complete. \qed

\medskip
\noindent
{\bf II) Proof in general case:} We will now remove the majorization condition \eqref{relbound}. This condition has been used in an essential way in most of the arguments so far because without it, the operator $k$ in \eqref{kdef} is unbounded and thus not an element of $\cA$. For unbounded operators the Araki-Trotter product formula and the $L_p$-techniques are not available and it seems non-trivial extending them to an unbounded framework. We will therefore proceed in a different way and define a regularization of $\omega_\psi$ such that the majorization condition \eqref{relbound} holds and such that, at the same time, the desired entropy inequality can be obtained in a limit wherein the regulator is removed. However, it is clear that this regularization must be carefully chosen because the relative entropy is not continuous but only lower semi-continuous. By itself the latter is insufficient for our purposes since the desired inequality \eqref{mainth} has both signs of the relative entropy. 

Our regularization combines a trick invented in paper I with the convexity of the relative entropy. 
As in paper I, we consider a function $f(t), t \in \RR$ with the following properties. 
\begin{itemize}
\item[(A)] The Fourier transform of $f$ 
\begin{equation}
\tilde{f}(p) = \int_{-\infty}^{\infty}  e^{ - i t p} f(t) dt
\end{equation}
exists as a real and non-negative Schwarz-space function.
This implies that the original function $f$ is Schwarz and has finite $L_1(\mathbb{R})$ norm,  $\| f \|_1 < \infty$.
\item[(B)] $f(t)$ has an analytic continuation to the upper complex half plane such that  the $L_1(\mathbb{R})$ norm 
of the shifted function has $\| f( {\cdot} + i\theta) \|_1 < \infty$
for $0 < \theta < \infty$. 
\end{itemize}
Such functions certainly exist (e.g. Gaussians). 
We also let $f_P(t) = P f( t P)$ for our regulator $P>0$, and we define a regulated version of $|\psi\rangle$ by 
\ben
\left| \psi_{P} \right> = \frac{\tilde{f}_P(\ln \Delta_{\eta,\psi} ) \left| \psi \right> }{\|\tilde{f}_P(\ln \Delta_{\eta,\psi} )\psi\| }. 
\een
As shown in paper I, some key properties of the regulated vectors are:
\begin{enumerate} 
\item[(P1)]
$
\omega_{\psi_P} \le c_P \omega_\eta 
$
for some $c_P >0$ which may diverge as $P \to \infty$,
\item[(P2)] 
$s-\lim_{P \to \infty} |\psi_P\rangle = |\psi\rangle$ (strong convergence),
\item[(P3)] 
$- 2 \ln\left(  \|f \|_1/\| \tilde f \|_\infty  \right) 
+ \limsup_{P \rightarrow \infty} S(\psi_{P}|\eta) 
\leq S(\psi|\eta)$,
\end{enumerate}
where the first item gives at least ``half'' of the domination condition \eqref{relbound}, the second states in which sense $|\psi_P\rangle$ 
approximates $|\psi\rangle$ and the third gives us an upper semi-continuity property of the relative entropy opposite to the usual lower semi-continuity property which holds for generic approximations. We define for small $\epsilon > 0$:
\ben
\sigma(a) = \langle \eta | a \eta \rangle \quad 
\rho_{P,\epsilon}(a) = (1-\epsilon) \langle \psi_P | a\psi_P \rangle + \epsilon \langle \eta | a\eta \rangle .
\een
Thus, by P1), the relative majorization condition \eqref{relbound} holds e.g. with $c={\rm max}(c_P,\epsilon^{-1})$  
between $\rho_{P,\epsilon}$ and $\sigma$.  
By P2), $\lim_{P \to \infty} \lim_{\epsilon \to 0} \| \rho-\rho_{P,\epsilon} \|=0$. In P3), we choose a function $f$ 
such that  $\|f \|_1/\| \tilde f \|_\infty=1$ (which must be Gaussian). The well-known convexity of the relative entropy gives
together with the definition of $\rho_{\epsilon,P}$ that ($\rho_P = \langle \psi_P | \ . \ \psi_P \rangle$)
\ben
S(\rho_{P ,\epsilon} | \sigma) \le (1-\epsilon) \ S(\rho_P,\sigma) + \epsilon \ S(\sigma | \sigma)= (1-\epsilon) \, S(\rho_P,\sigma).
\een
Combining this with P3), we get
\ben\label{x1}
\limsup_{P \to \infty} \limsup_{\epsilon \to 0} S(\rho_{P,\epsilon} | \sigma) \le S(\rho | \sigma). 
\een
The norm convergence $\lim_P \lim_\epsilon \rho_{P,\epsilon} \circ T = \rho \circ T$ by P2) also gives in combination with the usual lower
semi-continuity of the relative entropy, \cite{Araki2}, II thm. 3.7 (2), that 
\ben\label{x2}
\liminf_{P\to \infty} \liminf_{\epsilon \to 0} S(\rho_{P,\epsilon} \circ T | \sigma \circ T) \ge S(\rho \circ T| \sigma \circ T). 
\een
Now we combine eq.s \eqref{x1}, \eqref{x2} with 
part I of the proof applied to the states $\rho_{P,\epsilon}$ and $\sigma$, which obey the relative majorization condition. 
We get:
\ben
S(\rho | \sigma) - S(\rho \circ T| \sigma \circ T) \ge \limsup_{P \to \infty} \limsup_{\epsilon \to 0}
S_{\rm meas}(\rho_{P,\epsilon} | \rho_{P,\epsilon} \circ T \circ \alpha_{T,\sigma}).
\een
The proof of part II is then finished by proving lower semi-continuity for the measured relative entropy:
\begin{lemma}
If $\mu_n, \nu_n, \mu, \nu \in \sS(\cA)$ are such that $\lim_n \mu_n = \mu$ and $\lim_n \nu_n = \nu$ in the norm sense, then
$S_{\rm meas}(\mu | \nu) \le \liminf_n S_{\rm meas}(\mu_n | \nu_n)$.
\end{lemma}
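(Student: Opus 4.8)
The plan is to exploit the variational form \eqref{SMdef} of the measured relative entropy, which exhibits $S_{\rm meas}$ as a supremum of functionals that are individually norm-continuous in the pair of states; a supremum of continuous functions is always lower semicontinuous. First I would rewrite the definition purely in terms of the state functionals. Since $h \in \cA_{\rm s.a.}$, the operator $e^{h/2}$ is bounded and self-adjoint, so $\| e^{h/2} \eta \|^2 = \langle \eta | e^{h} \eta \rangle = \omega_\eta(e^h)$, and therefore
\ben
S_{\rm meas}(\mu | \nu) = \sup_{h \in \cA_{\rm s.a.}} \{ \mu(h) - \log \nu(e^h) \},
\een
an expression depending only on the functionals $\mu = \omega_\psi$ and $\nu = \omega_\eta$ (consistent with the notation $S_{\rm meas}(\mu|\nu)$ used in the lemma).

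The key observation is that for each fixed $h \in \cA_{\rm s.a.}$ the map $(\mu, \nu) \mapsto \mu(h) - \log \nu(e^h)$ is continuous with respect to norm convergence of states. Indeed, $|\mu_n(h) - \mu(h)| \le \|h\| \, \|\mu_n - \mu\| \to 0$ because $h$ is bounded, and likewise $\nu_n(e^h) \to \nu(e^h)$ because $e^h \in \cA$ is bounded. Hence, to prove the lemma, I would fix $\delta > 0$ and choose $h \in \cA_{\rm s.a.}$ with $\mu(h) - \log \nu(e^h) \ge S_{\rm meas}(\mu|\nu) - \delta$ (if $S_{\rm meas}(\mu|\nu) = +\infty$, choose instead $h$ making the left side exceed any prescribed large constant). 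For this single admissible $h$ one then has, using the continuity just noted and the fact that each $S_{\rm meas}(\mu_n|\nu_n)$ dominates the bracket evaluated at $h$,
\ben
\liminf_n S_{\rm meas}(\mu_n | \nu_n) \ge \liminf_n \{ \mu_n(h) - \log \nu_n(e^h) \} = \mu(h) - \log \nu(e^h) \ge S_{\rm meas}(\mu | \nu) - \delta.
\een
Letting $\delta \to 0$ (or letting the large constant tend to $+\infty$ in the infinite case) gives the claim.

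The only point requiring a little care — and the one I expect to be the main, though minor, obstacle — is the continuity of the logarithmic term, which needs $\nu(e^h)$ and $\nu_n(e^h)$ to stay bounded away from $0$ so that $\log$ is continuous at the relevant value. This is automatic from positivity and normalization: since $e^h \ge e^{-\|h\|} \, 1$ as a bounded operator, one has $\nu(e^h), \nu_n(e^h) \ge e^{-\|h\|} > 0$ for every $n$, uniformly in $n$. Thus $\log \nu_n(e^h) \to \log \nu(e^h)$, the displayed inequality is justified, and the proof is complete.
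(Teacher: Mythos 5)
Your proof is correct and is exactly the argument the paper intends: the paper's own proof is the one-line remark that the lemma follows from the variational definition \eqref{SMdef} by ``choosing a near optimal $h$'', which is precisely your strategy of fixing an almost-maximizing $h$ and using norm-continuity of $\mu\mapsto\mu(h)$ and $\nu\mapsto\nu(e^h)$ for bounded $h$. Your additional observations (the identity $\|e^{h/2}\eta\|^2=\nu(e^h)$ and the uniform lower bound $\nu_n(e^h)\ge e^{-\|h\|}$ ensuring continuity of the logarithm) are the details the paper leaves implicit, and they are handled correctly.
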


\begin{proof}
This is a straightforward consequence of the variational definition \eqref{SMdef} of $S_{\rm meas}$, choosing a near optimal $h$. 
\end{proof}

\qed

{\bf Acknowledgements:} SH\ thanks Tom Faulkner for conversations and the Max-Planck Society for supporting the collaboration between MPI-MiS and Leipzig U., grant Proj.~Bez.\ M.FE.A.MATN0003.

\appendix

\section{Weighted $L_p$ spaces \cite{AM} and variational formulae}

The weighted $L_p$-spaces were defined by \cite{AM} relative to a fixed vector $|\psi\rangle \in \H$ in the a natural cone of a standard representation of a von Neumann algebra $\cM$. 
For $p \ge 2$, the space $L_p(\cM, \psi)$ is defined as 
\ben
L_p(\cM, \psi) = \{ |\zeta\rangle \in \bigcap_{|\phi\rangle \in \H} \sD(\Delta_{\phi, \psi}^{(1/2)-(1/p)}), \|\zeta\|_{p,\psi} < \infty\}.
\een
Here, the norm is 
\ben
\|\zeta\|_{p,\psi} = \sup_{\|\phi\|=1} \|\Delta_{\phi, \psi}^{(1/2)-(1/p)}\zeta\|.
\een
For $1 \le p < 2$, $L_p(\cM, \psi)$ is defined as the completion of $\sH$ with respect to the following norm:
\ben\label{eq:pnorm}
\|\zeta\|_{p,\psi} = \inf \{  \|\Delta_{\phi, \psi}^{(1/2)-(1/p)}\zeta\| : \|\phi\|=1, \pi^\cM(\phi) \ge \pi^\cM(\psi) \}.
\een
In \cite{AM}, it is assumed for most results that $|\psi\rangle$ is cyclic and separating. When using such results in the main text, 
we will be in that situation. A somewhat different approach replacing the relative modular operator by the Connes spatial derivative
and containing also many new results is laid out some detail in \cite{Berta2}.

\end{document}